\newtheorem{example}{Example}
\newcommand{\lsbrace}{\ensuremath{|\![}}
\newcommand{\rsbrace}{\ensuremath{]\!|}}
\newcommand{\eqdef}{\stackrel{\mbox{\scriptsize \rm def}}{=}}
\newcommand{\comp}{\sim}
\newcommand{\cname}{{\cal A}}
\newcommand{\Act}{{\sf Act}}
\newcommand{\shapes}{\mathbb{S}}
\newcommand{\bonds}{{\sf bonds}}
\newcommand{\Split}{{\sf split}}
\newcommand{\reference}{{\sf global}}
\newcommand{\real}{\mathbb{R}}
\newcommand{\vect}[1]{\mathbf{#1}}
\newcommand{\gor}{\;\big|\;}
\newcommand{\proc}{\sf{3DP}}
\newcommand{\ch}[2]{\ensuremath{\langle #1, #2 \rangle}}
\newcommand{\channels}{{\cal C}}
\newcommand{\pconst}{{\cal K}}
\newcommand{\colliding}{{\sf colliding}}
\newcommand{\ftoc}{{\sf Ftoc}}
\newcommand{\mts}{\Delta}
\newcommand{\bpa}{\mathbb{B}}
\newcommand{\nil}{{\sf nil}}
\newcommand{\NIL}{{\sf Nil}}
\newcommand{\move}{\ensuremath{\mathsf{steer}}}
\newcommand{\union}[1]{\ensuremath{\,\langle #1 \rangle\,}}
\newcommand{\unionc}[1]{\,\langle #1 \rangle\,}
\def\sos#1#2{{\def\arraystretch{1.6}\begin{array}{c}#1\\\hline
#2\end{array}}}
\newcommand{\nar}[1]{\xrightarrow{#1}}
\newcommand{\wnar}[1]{\stackrel{#1}{\rightsquigarrow}}
\newcommand{\dnar}[1]{\stackrel{#1}{\Rightarrow}}
\newcommand{\enar}[1]{\xrightarrow{#1}_{e}}
\newcommand{\inar}[1]{\xrightarrow{#1}_{i}}
\newcommand{\pos}{{\mathbb P}}
\newcommand{\vel}{{\mathbb V}}
\newcommand{\nets}{{\mathbb N}}
\def\name#1{\mbox{\sc #1}}
\newcommand{\equaldef}{\stackrel{\mbox{\footnotesize{def}}}{=}}
\newcommand{\collision}[2]{\stackrel{{#1}}{\longleftrightarrow}_{#2}}
\newcommand{\shape}{{\sf shape}}
\newcommand{\bs}[4]{\ensuremath{\langle #1, #2, #3, \mathbf{#4}
\rangle}}
\newcommand{\boundary}[1]{\ensuremath{{\cal B}(#1)}}
\newcommand{\points}[1]{\ensuremath{{\cal P}(#1)}}
\newcommand{\velocity}[1]{{\bf v}(#1)}
\newcommand{\mass}[1]{m(#1)}
\newcommand{\referencepoint}[1]{\ensuremath{{\cal R}(#1)}}
\newcommand{\timedomain}{{\mathbb T}}
\newtheorem{definition}{Definition}
\newtheorem{theorem}{Theorem}
\newtheorem{proposition}{Proposition}
\newtheorem{lemma}{Lemma}
\begin{document}

\vspace{1cm}
\begin{center}
{\Large\bf Shape Calculus:\\ Timed Operational Semantics and Well-formedness}
\end{center}
\vspace{4mm}

\begin{center}
{\large Ezio Bartocci$^1$ \qquad
Diletta Romana Cacciagrano$^1$ \qquad
Maria \nolinebreak[4] Rita \nolinebreak[4] Di \nolinebreak[4] Berardini$^1$ \qquad
Emanuela Merelli$^1$ \qquad
Luca Tesei}\footnote{School of Science and Technology, University of Camerino\\
Via Madonna delle Carceri 9, 62032, Camerino (MC), Italy.\\ Email:
\texttt{name.surname@unicam.it}}
\end{center}

\vspace{3ex}

\date{\ }

\begin{abstract}
The Shape Calculus is a bio-inspired calculus for describing 3D shapes moving in a space. A shape forms a 3D process when combined with a behaviour. Behaviours are specified with a timed CCS-like process algebra using a notion of channel that models naturally binding sites on the surface of shapes. Processes can represent molecules or other mobile objects and can be part of networks of processes that move simultaneously and interact in a given geometrical space. The calculus embeds collision detection and response, binding of compatible 3D processes and splitting of previously established bonds. In this work the full formal timed operational semantics of the calculus is provided, together with examples that illustrate the use of the calculus in a well-known biological scenario. Moreover, a result of well-formedness about the evolution of a given network of well-formed 3D processes is proved.
\end{abstract}

\section{Introduction}

In a near future, systems biology will profoundly affect health care and medical science. One aim is to design and test ``in-silico'' drugs giving rise to individualized medicines that will take into account physiology and genetic profiles~\cite{Finkelstein2004}. The advantages of performing in-silico experiments by simulating a model, instead of arranging expensive in-vivo or in-vitro experiments, are evident. But of course the models should be as faithful as possible to the \emph{real system}.

Towards the improving of the faithfulness of languages and models proposed in literature in the field of systems biology, the Shape Calculus \cite{Bartocci2009,Bartocci2010,Bartocci2010b} was proposed as a very rich language to describe mainly, but not only, biological phenomena. The main new characteristics of this calculus are that it is spatial - with a geometric notion of a 3D space - and it is shape-based, i.e.\ entities have geometric simple or complex shapes that are related to their ``functions'', that is to say, in the context of formal calculi, the possible interactions that can occur among the entities (called 3D processes, in the same context). Note that there are some other approaches that use a geometric notion of space \cite{BarbutiMMP09,John2008,sps}, but they do not fully exploit the potentiality of geometry: they only consider positions that are centers of perception/communication spheres or discretized the space using grids. Moreover, differently to classical mathematical models for biological systems, the Shape Calculus is individual-based, that is to say it considers autonomous entities that interact with others in order to give rise to an emerging behaviour of the whole system. However, this characteristic is also present in other languages and models proposed in literature \cite{Priami2004,Regev2004,Cardelli2004,Bortolussi2007,Ciocchetta2008}. Parallel to the introduction of the Shape Calculus, a simulation environment called  {\sc BioShape} \cite{bioshapeurl,iccs2010} has been proposed. This environment embodies all the concepts and features of the Shape Calculus, plus others, more detailed, characteristics that are related to the implementation of simulations in different biological case studies \cite{iccs2010,cs2bio2010,acri2010}. The Shape Calculus can be considered the formal core of {\sc BioShape}. On the Shape Calculus, formal verification techniques can be applied, while in-silico experiments (simulations) can be performed in {\sc BioShape}. These two approaches can be used successfully in an independent way, but they can also be combined to address complex biological case studies.

In \cite{Bartocci2010}, the language of the Shape Calculus was introduced with the main aim of gently and incrementally present all its features and their relative semantics. The motivations behind the type and nature of the calculus operators were discussed and a great variety of scenarios in which the calculus may be used effectively were described. In \cite{Bartocci2010b} the timed operational semantics of the calculus was introduced and a well-formedness property of the dynamics of the calculus was proved. 

This paper is an extended version of \cite{Bartocci2010b} in which the original concepts of the calculus are fully recalled and in which further examples are provided in order to explain better the technical details of its timed operational semantics. 

We present the full timed operational semantics of the Shape Calculus and we explain with proper running examples all the technical points that need particular attention and further explanation. Moreover, we define a concept of well-formedness, starting from shapes and porting it to more complex calculus objects such as 3D processes and, ultimately, networks of 3D processes. In the calculus, only well-formed objects are considered. Well-formedness is a standard concept used to avoid strange and unwanted situations in which a term can be legally written by syntax rules, but that semantically corresponds to a contradictory situation, for instance, in our case, a composed shape whose pieces move in different directions. We prove that a given well-formed network of 3D processes always evolves into a well-formed network of 3D processes, that is to say, no temporal and spatial inconsistencies are introduced by the dynamics of the calculus.

The paper is organized as follows. Section~\ref{sec:overview} recalls the main concepts of the calculus and focuses on weak and strong split operations whose semantics are slightly more complicated to define than that of other operators. Section~\ref{sec:3ds} introduces formally 3D shapes, shape composition, movement, collision detection and collision response. Section~\ref{sec:three} defines behaviours and 3D processes giving them full semantics. Section~\ref{sec:networks} puts all the pieces together and specifies precisely networks of 3D processes and a general result of dynamic well-formedness is proven. Finally, Section~\ref{sec:conclusion} concludes with ongoing and future work directions. For the sake of readability all the proofs have been moved in Appendices~A,~B~and~C.

\section{Recall of Main Concepts}
\label{sec:overview}

\begin{figure}[t]
\begin{center}
\includegraphics[width=12cm]{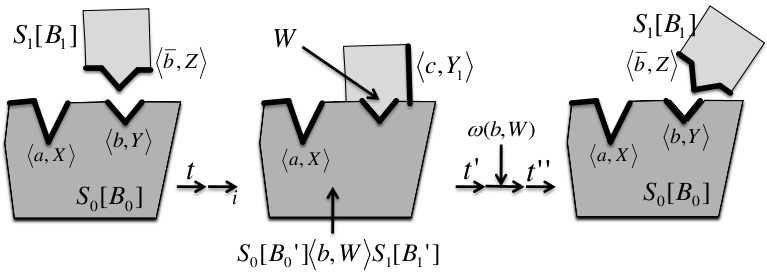}
\end{center}
\caption{An example of binding and subsequent weak-split of two 3D processes.}
\label{fig:WeakSplitting}
\end{figure}

In this section we recall the main concepts the calculus introduced in \cite{Bartocci2010} and we focus in particular to the weak and strong split operations, in order to make them clear from the beginning and, thus, then smoothly define their semantics, which requires some particular technical expedients.

The general idea of the Shape Calculus is to consider a 3D space in which several shapes reside, move and interact. While time flows, shapes move according to their velocities, that can change over time both due to a general motion law - for instance as in a gravitational or in an electromagnetic field, or as in Brownian motion - and due to collisions that can occur between two or more shapes. Collisions can result in a bounce, that is to say {\em elastic} collisions. Instead, as it often happens in biological scenarios, colliding objects can bind and become a new compound object moving in a different way and possibly having a different behaviour. In this case we speak of {\em inelastic} collisions since they are treated with the physical law for that kind of collision.

The Shape Calculus can be used to represent a lot of scenarios at different scales in different fields \cite{Bartocci2010}. However, in this work we use a well-known biological scenario in order to introduce simple running examples that illustrate the semantics of the calculus operators. We consider biochemical reactions occurring inside a cell. Every species of involved molecules has a specific shape and we know from biology that the functions of a molecule are tightly related to its shape. For instance, in enzymatic reactions the functional sites that are active in the enzyme structure, at a given time, determine which substrate (one or two metabolites) can bind the enzyme and proceed to the catalyzed reaction.

\begin{figure}[t]
\begin{center}
\includegraphics[width=12cm]{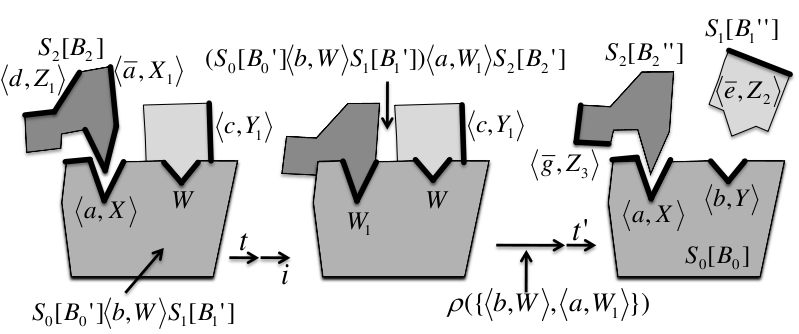}
\end{center}
\caption{An example of complex formation and subsequent strong-split.}
\label{fig:StrongSplitting}
\end{figure}

Fig.~\ref{fig:WeakSplitting} shows a (2D, for simplicity) possible representation of an enzymatic reaction. The larger object represents an enzyme with a shape $S_0$ and a behaviour $B_0$; $S_0$ and $B_0$ together constitute a 3D process $S_0[B_0]$. The 3D process $S_1[B_1]$ represents a metabolite that is close to the given enzyme. Note that portions of the shape surfaces are highlighted: they are the {\em channels} (an extension of the notion of channels of CCS \cite{Milner1989}) that the corresponding 3D processes exhibit to the environment. Each channel has a {\em name} and an {\em active surface}. For instance, $\langle a, X \rangle$ is a channel of type $a$ on the active site $X$ of $S_0[B_0]$. The enzyme in Fig.~\ref{fig:WeakSplitting} has two channels and its behaviour can be specified as: $B_0 = \langle b,Y \rangle . B'_0 + \langle a,X \rangle . B''_0$. The plus operator represents a non-deterministic choice between two possible communications on the channels. This non-determinism is resolved during the evolution of the system depending on which 3D processes will collide with the enzyme and where.

Following the evolution proposed in the figure, after some time $t$ elapsed (represented by the transition $\nar{t}$) and after a detection and resolution of an inelastic collision (transition $\inar{}$), we get one compound process represented by the term $S_0[B_0']\langle b, W \rangle S_1[B_1']$. The bond is established on the surface of contact $W$ and the name $b$ records the type of channels that bound. Note that communication, i.e.\ {\em binding}, can only happen if there is a collision between two 3D processes that expose compatible channels (name and co-name \`a la CCS) on their surface of contact. If the channels were not compatible, the collision would have been treated as elastic and the two 3D processes would have simply bounced. By letting $B'_1 = \langle c, Y_1 \rangle. B''_1$ we allow the component of shape $S_1$ to open a new channel and to bind with other colling processes.

The third stage of Fig.~\ref{fig:WeakSplitting} represents a possible evolution of an enzyme binding with a substrate; it can happen that, for some reason, the bond is loose and the two molecules are free again. To model this kind of behaviour we introduce a special kind of {\em not urgent} split operation, called {\em weak-split operation}, that can be delayed of an unspecified time. This is another source of non-determinism in the calculus. Fig.~\ref{fig:StrongSplitting} shows what happens when another substrate -- that we represent as the 3D process $S_2[B_2]$ with a channel $\langle \overline{a}, X_1 \rangle$  -- binds with the compound process $S_0[B_0']\langle b, W \rangle S_1[B_1']$ on the common surface $W_1$. In the terminology of biochemical reactions, a final complex has been formed. As a consequence, the reaction must proceed and the products must be released. In our calculus, reactions are represented by  {\em strong-split operations}. Differently from weak-split operations, this kind of split operations cannot be delayed and {\em must} occur as soon as they are enabled, i.e.\ when all the involved components can release all the bonds. In this example the involved components are $S_0[B'_0]$, $S_1[B'_1]$ and $S_2[B'_2]$; the set of bonds to be split is $L= \{\langle b, W \rangle, \langle a, W_1 \rangle \}$.

\section{3D Shapes}
\label{sec:3ds}

We start by introducing three dimensional shapes as terms of a suitable language, allowing simpler shapes to bind and form more complex shapes. From now on we consider assigned a {\em global} coordinate system in a three dimensional space represented by $\real^3$. Let $\pos, \vel = \real^3$ be the sets of positions and velocities, respectively, in this coordinate system. Throughout the paper, we assume relative coordinate systems that will always be w.r.t.\ a certain shape $S$, i.e.\ the origin of the relative system is the reference point ${\bf p}$ of $S$. We refer to this relative system as the {\em local} coordinate system of shape $S$. If $\mathbf{p} \in \pos$ is a position expressed in global coordinates, and $V \subseteq \pos$ is a set of points expressed w.r.t.\ a local coordinate system whose origin is $\mathbf{p}$, we define $\reference(V, {\bf p}) = V + {\bf p} = \{ {\bf x} + {\bf p} \,|\, {\bf x} \in V\}$ to be the set $V$ w.r.t.\ the global coordinates. Using local coordinate systems we can express parts of a given shape, such as faces and vertexes, independently from its actual global position.

\begin{definition}[Basic Shapes]
\label{def:bshapes}
A basic shape $\sigma$ is a tuple $\bs{V}{m}{\mathbf{p}}{v}$ where $V \subseteq \pos$ is a {\em convex polyhedron} (e.g.\ a \emph{sphere}, a \emph{cone}, a \emph{cylinder}, etc.)\footnote{From a syntactical representation point of view, we assume that $V$ is finitely represented by a suitable data structure, such as a formula or a set of vertices.} that represents the set of shape points; $m \in \real^+$, ${\bf p} \in \pos$ and ${\bf v} \in \vel$ are, resp., the mass, the centre of mass\footnote{We actually need only a reference point. Thus, any other point in $V$ can be chosen.} and the velocity of $\sigma$. All possible basic shapes are ranged over by $\sigma, \sigma', \cdots$. We also define the {\em boundary} $\boundary{\sigma}$ of $\sigma$ to be the subset of points of $V$ that are on the surface of $\sigma$\footnote{Note that we consider only closed shapes, i.e.\ they contain their boundary.}
\end{definition}

Note that basic shapes are very simple convex shapes. They can be represented by means of suitable and efficient data structures and are handled by the most popular algorithms for motion simulation, collision detection and response~\cite{Ericson2005}. Three dimensional shapes of any form can be approximated - with arbitrary precision - by composing basic shapes: the composition of two shapes corresponds to the construction of a third shape by ``gluing the two components on a common surface. This concept is generalized by the following definition.

\begin{definition}[3D shapes]
\label{def:shapes}
The set $\shapes$ of the  {\em 3D shapes}, ranged over by $S, S', \cdots$, is generated by the grammar:
$$S ::= \sigma \gor  S \union{X} S $$
where $\sigma$ is a basic shape and $X \subseteq \pos$. If $S = \sigma = \bs{V}{m}{\mathbf{p}}{v}$, we define $\points{S} = V$, $\mass{S} = m$, $\referencepoint{S} = {\bf p}$,  $\velocity{S} = \{ {\bf v}\}$ to be, resp., the {\em set of points}, the {\em mass}, the {\em reference point} and the {\em velocity} of $S$. If $S = S_1 \union{X} S_2$  is a compound shape, then: $\points{S} = \points{S_1} \cup \points{S_2}$, $\mass{S} = \mass{S_1} + \mass{S_2}$, $\referencepoint{S} = \big(\mass{S_1}\cdot \referencepoint{S_1} + \mass{S_2} \cdot \referencepoint{S_2}\big) / \big( \mass{S_1} + \mass{S_2} \big)$\footnote{Again for simplicity, we use the centre of mass as the reference point. Any other point can also be chosen.} and $\velocity{S} = \velocity{S_1} \cup \velocity{S_2}$. Finally, the boundary of $S$ is defined to be the set $\boundary{S} = \big( \boundary{S_1} \cup \boundary{S_2}\big) \setminus \{ \mathbf{x} \in \pos \mid \mathbf{x} \mbox{ is interior of } \points{S_1 \union{X} S_2} \}$, where a point $\mathbf{x} \in V \subseteq \pos$ is said to be \emph{interior} if there exists an open ball with centre $\mathbf{x}$ which is completely contained in $V$.
\end{definition}

In this paper we only consider shapes that are \emph{well-formed} according to the following
definition.

\begin{definition}[Well-formed shapes]
\label{def:shapeswf}
Each basic shape $\sigma$ is well-formed. A compound shape $S_1 \union{X} S_2$ is well-formed
iff:
\begin{enumerate}
\item both $S_1$ and $S_2$ are well-formed;
\item \label{item:boundaries} the set $X = \points{S_1} \cap \points{S_2}$ is {\em non-empty} and
equal to $\boundary{S_1} \cap \boundary{S_2}$;
\item \label{item:velocity} $\velocity{S_1 \union{X} S_2}$ is a singleton $\{{\bf v}\}$%
\footnote{With abuse of notation, throughout the paper, we write $\velocity{S}$ also to refer to the element $\mathbf{v}$ of the singleton  $\{ \mathbf{v} \}$ as this is not ambiguous when $S$ is
well-formed.} where ${\bf v} = \velocity{S_1} = \velocity{S_2}$.
\end{enumerate}

\noindent Below we say that two shapes $S_1$ and $S_2$ {\em interpenetrate} each other if there exists a point $\mathbf{x}$ that is interior of both $\points{S_1}$ and $\points{S_2}$. In other terms, they interpenetrate iff the set $X= \points{S_1} \cap \points{S_2} \neq \emptyset$ is not a subset of $ \boundary{S_1} \cap \boundary{S_2}$. If $S = S_1 \union{X} S_2$ is well-formed, $X$ is said to be the {\em surface of contact} between $S_1$ and $S_2$; moreover, each  $\mathbf{x} \in X$ is a {\em point of contact}.
\end{definition}

Condition~(\ref{item:boundaries}) guarantees that well-formed compound shapes touch but do not
interpenetrate; the surface of contact $X$ is always on the boundary of both $S_1$ and $S_2$. It
can be a single point, a segment or a surface, depending on where the two shapes are touching
without interpenetrating. Most of the time $X$ is a (subset of) a \emph{feature} of the
basic shapes composing the 3D shape, i.e., a face, an edge or a vertex.
Condition~(\ref{item:velocity}) imposes that all the shapes forming a compound shape have the same
velocity; thus, the compound shape moves as a unique body.

A compound 3D shape $S$ can be represented in a number of different ways by rearranging its basic
shapes and surfaces of contact. All these possible representation are `equivalent' w.r.t.\  the
structural congruence defined below.

\begin{definition}[Structural Congruence of 3D Shapes]
\label{def:shapecongruence}
The structural \\ congruence relation over $\shapes$, denoted by $\equiv_S$, is the smallest relation that satisfies the following rules:
\begin{enumerate}
\item $S_1 \union{X} S_2 \equiv_S S_2 \union{X} S_1$;
\item $(S_1 \union{X} S_2) \union{Y} S_3 \equiv_S S_1 \union{X} (S_2 \union{Y} S_3)$ provided that
the surface of contact $Y \subseteq \boundary{S_2} \cap \boundary{S_3}$.
\end{enumerate}
\end{definition}

The next example shows how the particular features of Shape Calculus can be used to construct a model in a well-known biological scenario.

\begin{example}[A Biological Example]
\label{sec:example}
The \emph{glycolysis} pathway is part of the process by which individual cells produce and consume
nutrient molecules. It consists of ten sequential reactions, all catalyzed by a specific enzyme. We focus, in this example, on the first reaction that can be described as

\emph{glucose, ATP  \,\, \ce {<=>} \,\, glucose-6-phosphate, ADP, H$^{+}$}

\noindent where an ATP is consumed and a molecule of glucose (GLC) is phosphorylated to glucose 6-phosphate (G6P), releasing an ADP molecule and a positive hydrogen ion (Hydron).

\begin{figure}[tbh]
\begin{center}
\includegraphics[width=12cm]{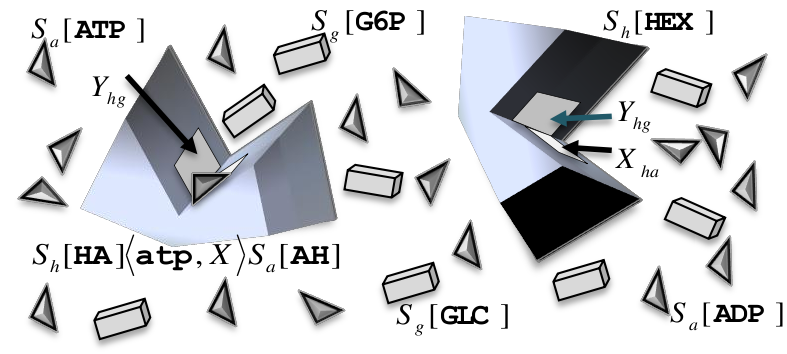}
\end{center}
\caption{A network of 3D processes for describing the first reaction of the glycolysis pathway.}
\label{fig:glyco}
\end{figure}

The enzyme catalyzing this first reaction is \emph{Hexokinase} (HEX). GLC, G6P, ATP, ADP and H${^+}$ are metabolites. Both enzymes and metabolites are autonomous cellular entities that continuously move within the cytoplasm. The transformation of a metabolite into the one that follows in the ``pipeline'' of the pathway (in this case, GLC into G6P) depends on the meeting (collision in binding sites) of the right enzyme (in this example HEX) with the right metabolites, in this example GLC and ATP. The order of these bindings does not matter. After this binding the reaction takes place and the products\footnote{In this example we neglect the hydron.} are released in the cytoplasm. A special case occurs when the enzyme has bound one metabolite and an environmental event makes it release the metabolite and not proceed to the completion of the reaction.

We model the shape of HEX, which we denote with $S_h$, by a polyhedron approximating its real shape and mass, available at public databases (e.g.\ \cite{PDB}). Fig.~\ref{fig:glyco} shows a network of 3D processes in which there are two hexokinases and some GLC, G6P, ATP and ADP 3D processes. Note that we use a unique kind of shape for GLC ad G6P, denoted by $S_g$, and another unique kind of shape for ATP and ADP, denoted $S_a$. They will be distinguished by their behaviours.

Note that modelling of biochemical reactions with our calculus is very different from the usual ODE-based approach. This is because the Shape Calculus embeds concepts and features of a particle-based, individual-based and spatial geometric approach and we wanted to show, in a relatively simple and well-known scenario, how all these concepts and features can be used.
\end{example}

\subsection{Trajectories of Shapes}
\label{sec:trajectories}
The general idea of the Shape Calculus is to consider a three-dimensional space in which several shapes reside, move and interact. One of the choices to be made is how the velocity of each shape changes over time. We believe that a continuous updating of the velocity - that would be a candidate for an `as precise as possible approach of modelling - is not a convenient choice. The main reason is the well-known compromise between the benefits of approximation and the complexity of precision. Our choice, quite common also in computer graphics~\cite{Ericson2005}, is to approximate a continuous trajectory of a shape with a polygonal chain, i.e.\ a piecewise linear curve in which each segment is the result of a movement with a constant velocity. The vertices of the chain corresponds to velocity updates.

To this purpose we define a global parameter $\mts \in \real^+$, called \emph{movement time step},
that represents the maximum period of time after which the velocities of all shapes are updated.
The quantification of $\mts$ depends on the desired degree of approximation and also on other
parameters connected to collision detection (see Section~\ref{sec:collisiondetection}). In some situations, the time of updating can be shorter than $\mts$ because, before that time, collisions between moving shapes can occur. These collisions must be resolved and the whole system must re-adapt itself to the new situation. The time domain $\timedomain = \real^+_0$ is then divided into an infinite sequence of time steps $t_i$ such that $t_0=0$ and $t_i \leq t_{i-1} + \mts$ for all $i > 0$. An example in~\cite{Bartocci2010} (cf. Section 2) shows in more details how the timeline can be broken up into such time instants.

In the calculus, the updating of velocities is performed by exploiting a function $ \move \colon \timedomain \rightarrow (\shapes \hookrightarrow \vel)$ that describes how the velocity of all existing shapes (i.e.\ all shapes that are currently moving in the space) at each time $t$ is changed. We assume that, at any given time instant $t \in \timedomain$, $\move \, t \,S $ is undefined iff shape $S$ does not exist and, hence, its velocity has not to be changed.

This approach provides us with the maximal flexibility for defining motion. Static shapes are those shapes whose velocity is always zero\footnote{To represent walls, we also need to assign an infinite value to the mass of these objects, otherwise they can be moved anyway due to collisions.}. A gravity field can be simulated by updating the velocities according to the gravity acceleration. A Brownian motion can be simulated by choosing a random 3D direction and then defining the length of the vector w.r.t.\ the mass and/or the volume of the shape. In this paper, we do not consider movements due to rotations. However, this kind of movements can be easily added to our shapes by assigning an angular velocity and a moment of inertia to a shape and then by performing a compound motion of translation and rotation along the movement time step.

Let us define now some useful notation and properties.

\begin{definition}[Evolution of shapes over time]
\label{def:shapemoving}
Let $S \in \shapes$ and $t \in \timedomain$; $S + t$, i.e.\ the shape $S$ after $t$ time
units, is defined by induction on $S$:
\begin{tabbing}
Basic: \quad \= $\bs{V}{m}{{\bf p}}{v} + t = \bs{V + (t \cdot {\bf v})}{m}{\mathbf{p} + (t \cdot
{\bf v})}{v}$\\
Comp: \> $(S_1 \union{X} S_2) + t = (S_1 + t) \union{X + t\cdot \velocity{S}}(S_2 + t)$
\end{tabbing}
\end{definition}

\begin{definition}[Updating shape velocity]
\label{def:shape-velocity}
Let $S\in \shapes$ and $\vect{w}\in\vel$. We define the shape $S\lsbrace\vect{w}\rsbrace$, i.e.\
$S$ whose velocity is updated with $\vect{w}$, as follows:

\begin{tabbing}
Basic: \quad \= $\bs{V}{m}{{\bf p}}{v} \lsbrace\vect{w}\rsbrace = \bs{V}{m}{{\bf p}}{w}$\\
Comp: \> $(S_1 \union{X} S_2) \lsbrace\vect{w}\rsbrace = (S_1 \lsbrace\vect{w}\rsbrace)
\union{X}(S_2\lsbrace\vect{w}\rsbrace)$
\end{tabbing}
\end{definition}

The following result comes directly from Def.~\ref{def:shapeswf}.

\begin{proposition}
\label{prop:shapes-wellformedness}
Let $S \in \shapes$, $t\in \timedomain$ and $\vect{w}\in\vel$. If $S$ is
well-formed then $S+t$ and $S\lsbrace\vect{w}\rsbrace$ are well-formed.
\end{proposition}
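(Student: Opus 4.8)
The plan is to prove both statements simultaneously by structural induction on the term $S$, verifying the three conditions of Definition~\ref{def:shapeswf} at each step. The base case is immediate: if $S = \sigma$ is a basic shape, then both $\sigma + t$ and $\sigma\lsbrace\vect{w}\rsbrace$ are again basic shapes by Definitions~\ref{def:shapemoving} and~\ref{def:shape-velocity}, and every basic shape is well-formed. Hence the whole argument lives in the compound case $S = S_1 \union{X} S_2$, where I may assume as induction hypothesis that $S_1 + t$, $S_2 + t$, $S_1\lsbrace\vect{w}\rsbrace$ and $S_2\lsbrace\vect{w}\rsbrace$ are all well-formed.

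The velocity-update claim is the easy half. Updating the velocity alters neither the point sets nor the reference points, so a trivial induction gives $\points{S_i\lsbrace\vect{w}\rsbrace} = \points{S_i}$ and $\boundary{S_i\lsbrace\vect{w}\rsbrace} = \boundary{S_i}$ for $i = 1,2$. Since $S\lsbrace\vect{w}\rsbrace = (S_1\lsbrace\vect{w}\rsbrace)\union{X}(S_2\lsbrace\vect{w}\rsbrace)$ retains the very same surface of contact $X$, condition~(\ref{item:boundaries}) holds for $S\lsbrace\vect{w}\rsbrace$ precisely because it held for $S$. For condition~(\ref{item:velocity}), a short induction on the definition of $\lsbrace\cdot\rsbrace$ shows $\velocity{S_i\lsbrace\vect{w}\rsbrace} = \{\vect{w}\}$, so $\velocity{S\lsbrace\vect{w}\rsbrace} = \{\vect{w}\}$ is a singleton with the two components sharing velocity $\vect{w}$, as required.

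For the time-evolution claim the crucial ingredient is that a well-formed shape is translated rigidly. Writing $\vect{v} = \velocity{S}$, which is legitimate since well-formedness makes $\velocity{S}$ a singleton, I would first establish by structural induction the geometric identities
$$\points{S + t} = \points{S} + t\cdot\vect{v} \qquad\text{and}\qquad \boundary{S + t} = \boundary{S} + t\cdot\vect{v}.$$
The point-set equality holds because, in a well-formed compound, condition~(\ref{item:velocity}) forces both subshapes to carry the same velocity $\vect{v}$, so each basic component is shifted by the one common vector $t\cdot\vect{v}$; the boundary equality holds because translation by a fixed vector is a homeomorphism, hence maps interior points to interior points and surface points to surface points, and commutes with the union and set-difference used to define $\boundary{\cdot}$ on compounds. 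Granting these identities, condition~(\ref{item:boundaries}) for $(S_1 + t)\union{X + t\cdot\vect{v}}(S_2 + t)$ becomes a translate of the corresponding condition for $S$: since translation by $t\cdot\vect{v}$ is injective it commutes with intersection, so $\points{S_1+t}\cap\points{S_2+t} = (\points{S_1}\cap\points{S_2}) + t\cdot\vect{v} = X + t\cdot\vect{v}$, which is non-empty because $X$ is, and likewise $\boundary{S_1+t}\cap\boundary{S_2+t} = (\boundary{S_1}\cap\boundary{S_2}) + t\cdot\vect{v} = X + t\cdot\vect{v}$. Condition~(\ref{item:velocity}) is immediate because $+t$ leaves velocities unchanged, so $\velocity{S_i + t} = \velocity{S_i} = \vect{v}$ and $\velocity{(S_1\union{X}S_2)+t}$ is once more the singleton $\{\vect{v}\}$.

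I expect the only delicate point to be condition~(\ref{item:boundaries}) in the time-evolution step, and specifically the fact that the surface of contact $X$ is displaced by exactly $t\cdot\vect{v}$, so that the two translated components still meet along $X + t\cdot\vect{v}$ and neither drift apart nor interpenetrate. This is exactly where condition~(\ref{item:velocity}) of the hypothesis (a common velocity) is indispensable: had $S_1$ and $S_2$ moved with different velocities, the displaced components would in general fail to touch on a common boundary, and well-formedness would be destroyed. Everything else is routine bookkeeping, namely the commutation of a rigid translation with $\cup$, $\cap$ and $\setminus$ and the invariance of the interior/boundary distinction under a homeomorphism.
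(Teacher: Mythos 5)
Your proof is correct and follows essentially the same route as the paper's: a structural induction on $S$ verifying the three conditions of Definition~\ref{def:shapeswf}, supported by exactly the rigid-translation and velocity-update identities ($\points{S+t}=\points{S}+t\cdot\vect{v}$, $\boundary{S+t}=\boundary{S}+t\cdot\vect{v}$, preservation of velocity/points under $+t$ and $\lsbrace\vect{w}\rsbrace$) that the paper isolates as Lemmas~\ref{lemma:shape-properties1} and~\ref{lemma:shape-properties2} in Appendix~A. The only difference is presentational: you inline those auxiliary identities (with a slightly more topological justification of the boundary claim via translation being a homeomorphism) rather than stating them as separate lemmas.
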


Our intent is to represent a lot of shapes moving simultaneously in space as described above.
Inevitably, this scenario produces collisions between shapes when their trajectories encounter.

\subsection{Collision Detection}
\label{sec:collisiondetection}
There is a rich literature on collision detection systems, as this problem is fundamental in popular applications like computer games. Good introductions to existing methods for efficient collision detection are available and we refer to Ericson \cite{Ericson2005} and references therein for a detailed treatment.

For our purposes, it is sufficient to define an interface between our calculus and a typical collision detection system. We can then choose one of them according to their different characteristics, e.g.\ their applicability in large-scale environments or their robustness. It must be said, however, that the choice of the collision detection system may influence the kind of basic (or compound) shapes we can use, as, for instance, some systems may require the use of only convex shapes to be more efficient\footnote{The basic shapes that we consider in Definition~\ref{def:bshapes} are typically accepted by most of the collision detection systems.}.

\begin{figure}
\begin{center}
\includegraphics[width=12cm]{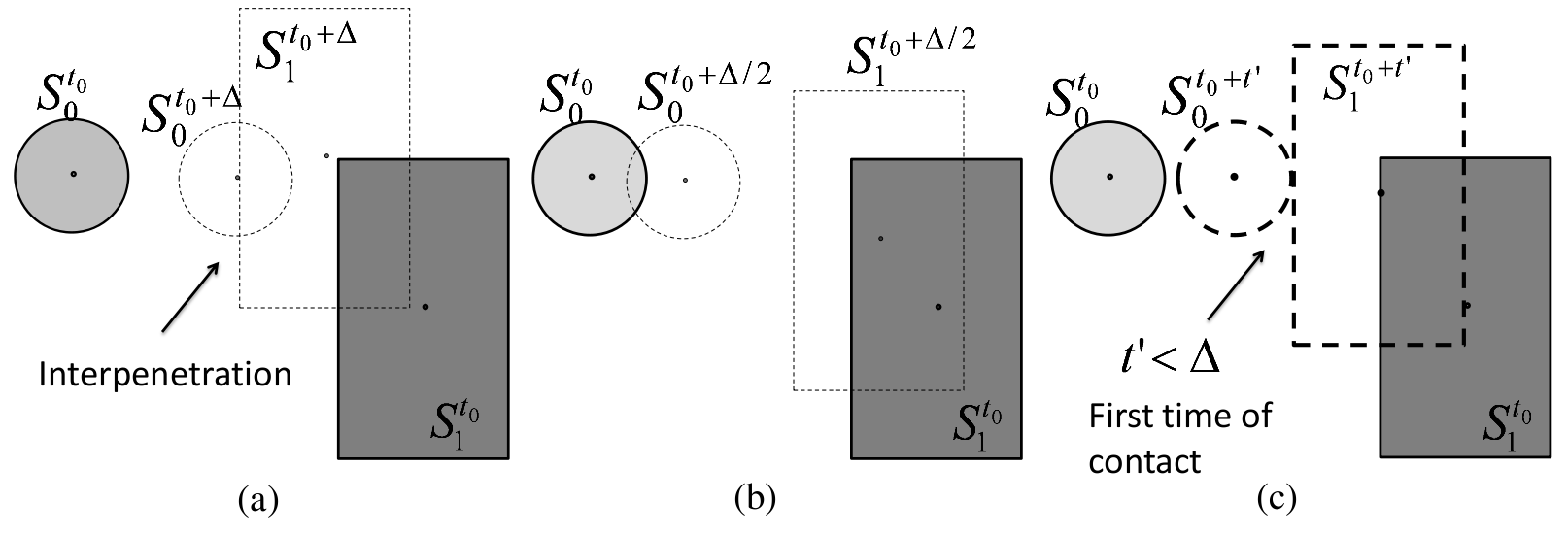}
\end{center}
\caption{Some steps to determine the first time of contact between two shapes.}
\label{fig:ftc}
\end{figure}

Typically, a collision detection algorithm assumes to start in a situation in which shapes do not interpenetrate. Then it tries to move all the shapes of a little\footnote{The time step must be chosen little enough to avoid that any two shapes, at their possible maximum velocity, start in a non-interpenetration state, engage in an interpenetration and exit from the interpenetration state in a single time step duration.} time step - that we have already introduced as the movement time step $\mts$ - and check if interpenetrations occurred\footnote{Typically, the major efforts of optimization are concentrated in this step since the number of checks is, in the worst case, $O(N^2)$ - where $N$ is the number of shapes in the space - but the shapes that are likely to collide are only those that are very close to each other.}. If so, it tries to consider only half of the original time step and repeat the interpentration check, i.e.\ it performs a binary search of the \emph{first time of contact} $t$ between two or more shapes, with some degree of approximation. Fig.~\ref{fig:ftc} shows these steps. In case $(a)$ the passage of the whole $\mts$ results in an interpenetration. Then, in $(b)$ the passage of $\mts/2$ is tried resulting into a non-contact. After some iterations the situation in $(c)$ is reached.

In addition to the first time of contact, a collision detection algorithm usually outputs the shapes that are colliding, i.e.\ are touching without interpenetrating after $t$, and some information about the surfaces or points of contact. We now define precisely what we expect to obtain from a collision detection system.

\begin{definition}[First time of contact]
\label{def:ftoc}
Let $I$ be a non-empty finite set of indexes and let $\{S_i\}_{i \in I}$ be a set of well-formed
shapes such that for all $i,j \in I$, $S_i$ and $S_j$ do not interpenetrate (see Def.~\ref{def:shapeswf}).
The \emph{first time of contact} of the shapes $S_i$, denoted $\ftoc(\{S_i\}_{i \in I})$, is a
number $t\in \timedomain$ such that:
\begin{enumerate}
\item for all $t' \in \timedomain$, $0 \leq t' \leq t$ and for all $i,j \in I$, $S_i+t'$ and
$S_j+t'$ do not interpenetrate;

\item there exist $i,j \in I$, with $i\neq j$, such that $\boundary{S_i + t} \cap \boundary{S_j + t} \neq \emptyset$, i.e., some shapes are touching at $t$;

\item for all $\epsilon > 0$ there exists $\delta$, $0 < \delta < \epsilon$, and $i,j \in I$, with $i \neq j$, such that $S_i+ (t+\delta)$ and $S_j+(t+\delta)$ interpenetrate, i.e., in $t$ some shapes are touching and any further movement makes them to interpenetrate.
\end{enumerate}
\end{definition}

Note that such a definition allows shapes that are touching without interpenetrating, and with
velocities that do not make them to interpenetrate (e.g., the same velocity), to move without
triggering a first time of contact. This  will be useful in Section~\ref{sec:networks} when we split previously compound shapes. Indeed, after the split these shapes have the same velocity and, hence, do not affect the next first time of contact.

\begin{definition}[Collision information]\label{def:colls-information}
Let $\{S_i\}_{i \in I}$ be a set of well-formed shapes and let $t = \ftoc(\{S_i\}_{i \in I})$ be
their first time of contact. The {\em set of colliding shapes} after time $t$  is denoted by
$\colliding(\{S_i\}_{i \in I}) \subseteq \shapes \times \shapes \times \wp(\pos)$. A tuple
$\langle S_i,S_j, X \rangle \in \colliding(\{S_i\}_{i \in I})$ iff:

\begin{enumerate}

\item $\points{S_i+t} \cap \points{S_j+t}$ is \emph{non-empty} and is equal to $\boundary{S_i + t}
\cap \boundary{S_j + t}$;

\item for all $\epsilon > 0$ there exists $\delta$, $0 < \delta < \epsilon$, such that $S_i+
(t+\delta)$ and $S_j+(t+\delta)$ interpenetrate.
\end{enumerate}
\end{definition}

\subsection{Collision Response} \label{sec:collisionresponse}

In this section, we briefly discuss the problem of {\em collisions response}~\cite{Hecker1997},
i.e.\ how collisions, once detected, can be resolved. We distinguish between {\em elastic} collisions (those in which there is no loss in kinetic energy) and {\em perfectly inelastic} ones (in which kinetic energy is fully dissipated)\footnote{Other different kinds of collisions can be easily added to the calculus provided that the corresponding collision response laws are given.}. After an elastic collision, the two shapes will proceed independently to each other but their velocities will be changed according to the laws for conservation of linear momentum and kinetic energy - Equations (1)-(2) in Def.~\ref{def:collisions}. On the contrary, two shapes that collide inelastically will bind together and will move as a {\em unique} body whose velocity is given by the law for conservation of linear momentum only - Equation~(3), in Def.~\ref{def:collisions}.

\begin{definition}[Collisions]
\label{def:collisions}
Let $S_1, S_2\in \shapes$ and let $X \subseteq \pos$ be a surface of contact between them. If $X$ is neither an edge nor a vertex of $S_1$, the velocities $\vect{w}_1$ and $\vect{w}_2$ of these shapes after an elastic collision in $X$ are given by:
$$
\begin{array}{c c c c }
(1) & \vect{w}_1 = \velocity{S_1}  - \dfrac{\lambda}{\mass{S_1}} \cdot \vect{n}
\quad  &
(2) & \vect{w}_2 = \velocity{S_2}  +  \dfrac{\lambda}{\mass{S_2}} \cdot \vect{n} \\
\end{array}
$$
\noindent where $\vect{n}$ is the normal of contact away from $X \subseteq \boundary{S_1}$, i.e.\
the unit vector perpendicular to the face of $S_1$ that contains $X$, and%
$$\lambda = 2 \, \dfrac{\mass{S_1} \, \mass{S_2}}{\mass{S_1} + \mass{S_2}}
\, \dfrac{\velocity{S_1} \cdot \vect{n} - \velocity{S_2} \cdot \vect{n}} {\vect{n} \cdot
\vect{n}}$$
If $X$ is either an edge or a vertex of $S_1$, $\vect{n}$ is the normal of contact away from the
shape $S_2$ and velocities $\vect{w}_1$ and $\vect{w}_2$ are obtained by means of
symmetric equations. In both cases, we write $S_1 \collision{X}{\mathit{e}} S_2$ to denote the pair
of velocities $(\vect{w}_1, \vect{w}_2)$. If $S_1$ and $S_2$ collide inelastically in the surface of
contact $X$, they will bind together as a unique body whose velocity (denoted with
$S_1 \collision{X}{\mathit{i}} S_2$) is given by:
$$\begin{array}{c c}
(3) & \vect{v} = \dfrac{\mass{S_1}}{\mass{S_1} + \mass{S_2}} \cdot
\velocity{S_1} +
\dfrac{\mass{S_2}}{\mass{S_1} + \mass{S_2}} \cdot \velocity{S_2}\\
\end{array}$$
\end{definition}

\section{3D Processes}
\label{sec:three}
In this section we introduce the timed process algebra whose terms describe the {\em internal
behaviour} of our 3D shapes. This is a variation of Timed CCS~\cite{Yi1990}, where basic actions
provide information about binding capability and splits of shape bonds. We use the following
notation. Let $\Lambda =\{a, b, \cdots\}$ be a countably infinite set of {\em channels names}
(names, for short) and $\overline{\Lambda} = \{\overline{a} \,|\, a \in \Lambda\}$ its
complementation; by convention we assume $\overline{\overline{a}} = a$ for each name $a$. Elements
in $\cname = \Lambda \cup \overline{\Lambda}$ are ranged over by $\alpha,\beta, \cdots$.

Binding capabilities are represented by {\em channels}, i.e.\ pairs $\langle \alpha, X\rangle$ where $\alpha \in \cname$ is a name and $X$ is a {\em surface of contact}. Intuitively, a surface
of contact is a portion of space (usually, a subset of the boundary of a given 3D shape) where the
channel itself is active and where binding with other processes are possible. Names introduce a
notion of compatibility between channels useful to distinguish between elastic and inelastic
collisions. If $\beta = \overline{\alpha}$ and $X \cap Y\neq \emptyset$ we say that the channels
$\langle \alpha, X \rangle$ and $\langle \beta, Y\rangle$ are {\em compatible}, written  $\langle
\alpha, X \rangle \comp \langle \beta, Y \rangle$. Otherwise, $\langle \alpha, X \rangle$ and
$\langle \beta, Y \rangle$ are {\em incompatible} which we write as $\langle \alpha, X \rangle \not \comp \langle \beta, Y \rangle$.

We also introduce two  different kinds of actions that represent splits of shape bonds. In particular, we distinguish between {\em weak-split actions} of the form $\omega(\alpha, X)$ and {\em strong-split actions} of the form $\rho(\alpha,X)$. With an abuse of notation, two weak-split actions $\omega(\alpha, X)$ and $\omega(\beta,Y)$ (as similarly for the strong-split actions $\rho(\alpha, X)$ and $\rho(\beta,Y)$) are compatible if so are the channels $\langle \alpha, X \rangle$ and $\langle \beta, Y \rangle$. We will see that a synchronization between a pair of compatible weak-split actions results in a weak-split operation, while synchronizations between multiple pairs of compatible strong-split actions correspond to a strong-split operation. These operations behave differently w.r.t.\ to {\em time passing} since the latter operation cannot time pass further, while the former one can be arbitrarily delayed.

Let $\channels$ be the set of all channels, $\omega(\channels) = \{\omega(\alpha, X) \,|\, \langle \alpha, X \rangle \in \channels\}$ and $\rho(\channels) = \{\rho(\alpha, X) \,|\, \langle \alpha, X \rangle \in \channels\}$ be the sets of {\em weak-split actions} and {\em strong-split actions}, resp. Our processes perform basic and {\em atomic} actions that belong to the set $\Act = \channels \cup \omega(\channels) \cup \rho(\channels)$ whose elements are ranged over by $\mu, \mu',\cdots$. We finally assume a countably infinite collection $\pconst$ of {\em process name} or {\em process constants}.

\begin{definition}[Shape behaviours]
\ \\ The set of {\em shape behaviours}, denoted by  $\bpa$, is generated by the following grammar:
$$B ::=\nil \gor \langle \alpha, X \rangle.B \gor \omega(\alpha,X).B \gor  \rho(L). B \gor
\epsilon(t).B \gor B + B \gor K $$
where $\langle \alpha, X\rangle \in \channels$, $L \subseteq \channels$ (non-empty) whose elements
are pairwise incompatible (i.e.\ for each pair $\langle \alpha, X \rangle, \langle \beta, Y \rangle
\in L$ it is $\langle \alpha, X \rangle \not\comp \langle \beta, Y \rangle$), $t\in \timedomain$ and
$K$ is a process name in $\pconst$.
\end{definition}

A brief description of our operators now follows. $\nil$ is the Nil-behaviour, it can not perform any action but can let time pass without limits. A trailing $\nil$ will often be omitted, so e.g.\ we write $\langle a, X \rangle. \omega(a, X)$ to abbreviate $\langle a, X \rangle. \omega(a, X).\nil$. $\langle \alpha, X \rangle.B$ and $\omega(\alpha, X).B$ are (action-)prefixing known from CCS; they evolve in $B$ by performing the actions $\langle \alpha, X \rangle$ and $\omega(\alpha, X)$, resp. $\langle \alpha, X \rangle. B$ exhibits a binding capability along the channel $\langle \alpha, X \rangle$, while $\omega(\alpha, X).B$ models the behaviour of a shape that, before evolving in $B$, wants to split a {\em single} bond established via the channel $\langle \alpha, X \rangle$.
$\rho(L).B$ is the {\em strong-split operator}; it can evolve in $B$ only after the execution of
{\em all} strong-split actions $\rho(\alpha, X)$ with $\langle \alpha, X \rangle \in L$.
The delay-prefixing operator $\epsilon(t).B$ (see~\cite{Yi1990}) introduces time delays in 3D processes; $t \in \timedomain$ is the amount of time that has to elapse before the idling time is over (see rules \name{Delay$_t$} and \name{Delay$_a$} in Tables~\ref{table:beh-temporal} and~\ref{table:beh-functional}). Finally, $B_1 + B_2$ models a non-deterministic choice between $B_1$ and $B_2$ and $K$ is a process definition.

In the remainder of this paper, we use processes in $\bpa$ to define the internal behaviour of
3D shapes. For this reason, we assume that sites in binding capabilities, as well as in weak- and
strong-split actions, are expressed w.r.t.\ the {\em local} coordinate system whose origin is
the reference point of the shape where they are embedded in.

\begin{definition}[Operational semantics of shape behaviours]
\label{def:behaviours-semantics}
The\\
\noindent SOS-rules that define the {\em temporal transition} relations $\wnar{t} \, \subseteq \,
(\bpa \times \bpa)$ for $t \in \timedomain$, that describe how shape behaviours evolve by letting
time $t$ pass, are provided in Table~\ref{table:beh-temporal}. We write $B \wnar{t} B'$ if
$(B,B')\in \wnar{t}$ and $B \wnar{t}$ if there is $B'\in \bpa$ such that $(B,B')\in \wnar{t}$.
Similar conventions will apply later on.
Rules in Table~\ref{table:beh-functional} define the {\em action transition} relations $\nar{\mu}
\subseteq (\bpa \times \bpa)$ for $\mu \in \Act$. These transitions describe which basic actions a
shape behaviour can perform.
\end{definition}

{\small
\begin{table}
\[
\begin{array}{|c|}
\hline
\name{Nil$_t$}\sos{}{\nil \wnar{t} \nil}
\quad
\name{Pref$_t$} \sos{ \mu \in \channels \cup \omega(\channels)}
{\mu.B \wnar{t} \mu.B}
\quad
\name{Str$_t$} \sos{} {\rho(L).B \wnar{t} \rho(L).B} \\
\name{Sum$_t$} \sos{B_1 \wnar{t} B'_1 \quad B_2 \wnar{t} B'_2}
{B_1 + B_2 \nar{t} B'_1 + B'_2}
\quad
\name{Del$_t$} \sos{t' \geq t} {\epsilon(t').B \wnar{t} \epsilon(t' - t).B}  \\
\name{Def$_t$} \sos{B \wnar{t} B'}
{K \wnar{t} B'} \quad \mbox{ if } K \equaldef B\\
\hline
\end{array}\]
\caption{Temporal behaviour of $\bpa$'s terms}
\label{table:beh-temporal}
\end{table}
}

Most of the rules in Table~\ref{table:beh-temporal} are those provided in~\cite{Yi1990}.
Rules~\name{Pref$_t$} and~\name{Str$_t$} state that processes like $\langle \alpha, X \rangle.B$,
$\omega(\alpha,X).B$ and $\rho(L).B$ can be arbitrarily delayed. The only rules in
Table~\ref{table:beh-functional} worth noting are those defining the functional behaviour of the
strong-split operator.
By Rules~\name{Str$_{1}$} and~\name{Str$_{2}$}, if  $\langle \alpha,X \rangle \in L$ then
$\rho(L).B$ can do a $\rho(\alpha,X)$-action and evolve either
in $B$ (if $L=\{\langle \alpha,X \rangle\}$) or in $\rho(L\backslash \{\langle \alpha,X
\rangle\}).B$ (otherwise). Rule~\name{Str$_{3}$} is needed to handle arbitrarily nested terms,
e.g.\ $\rho(\{\langle a, X \}).\rho(\{\langle \overline{b}, Y \}).B$. Other rules are as expected.

{\small
\begin{table}[t]
\[
\begin{array}{|c|}
\hline
\name{Pref$_{a}$} \sos{\mu \in \channels \cup \omega(\channels)}{%
\mu.B \nar{\mu} B
} \quad
\name{Del$_a$} \sos{ B \nar{\mu} B'}{%
\epsilon(0).B \nar{\mu} B'
} \quad
\name{Sum$_a$} \sos{B_1 \nar{\mu} B'}{%
B_1 + B_2 \nar{\mu} B'}  \\
\name{Def$_a$} \sos{B \nar{\mu} B'} {%
K \nar{\mu} B'} \quad  \mbox{ if } K \equaldef B\qquad
\name{Str$_{1}$} \sos{L=\{\langle \alpha, X \rangle\}}{%
\rho(L).B \nar{\rho(\alpha, X)} B
}
\\
\name{Str$_{2}$} \sos{ L = \{\langle \alpha, X \rangle\} \cup L' \;\; L' \neq\emptyset}{%
\rho(L).B \nar{\rho(\alpha, X)} \rho(L').B
} \quad
\name{Str$_{3}$} \sos{B \nar{\rho(\alpha, X)} B'}{%
\rho(L).B \nar{\rho(\alpha, X)} \rho(L).B'
}
  \\
\hline
\end{array}\]
\caption{Functional behaviour of $\bpa$-terms}
\label{table:beh-functional}
\end{table}
}

Now we are ready to define 3D processes, i.e.\ simple or compound shapes whose behaviour is expressed by a process in $\bpa$.

\begin{definition}[3D processes]
\label{def:processes}
The set $\proc$ of {\em 3D processes} is generated by the following grammar:
$$P:: = S[B] \gor P \unionc{a,X} P$$
where $S\in \shapes$, $B \in \bpa$, $a \in \Lambda$ and $X$ is a non-empty subset of $\pos$. The
shape of each $P\in \proc$ is defined by induction on $P$ as follows:

\begin{tabbing}
Basic: \quad \= $\shape(S[B]) = S$\\
Comp: \> $\shape(P \unionc{a,X} Q) = \shape(P) \union{X} \shape(Q)$
\end{tabbing}
We also define $\velocity{P} = \velocity{\shape(P)}$ and $\boundary{P} =\boundary{\shape(P)}$. Below
we often write $P \collision{X}{i}Q$  and $P \collision{X}{e} Q$ as  shorthand for $\shape(P)
\collision{X}{i} \shape(Q)$ and $\shape(P) \collision{X}{e} \shape(Q)$, resp. Finally, $P \lsbrace
\vect{v} \rsbrace$ is the 3D process we obtain by updating $P$'s velocity as follows:
\begin{tabbing}
Basic: \quad \=  $(S[B])\lsbrace \vect{v} \rsbrace = (S\lsbrace \vect{v} \rsbrace)[B]$\\
Comp: \> $(P \unionc{a,X} Q)\lsbrace \vect{v} \rsbrace = (P\lsbrace \vect{v} \rsbrace)
\unionc{a,X} (Q\lsbrace \vect{v} \rsbrace)$
\end{tabbing}
We finally write $\move\ t \, P$ to denote $P \lsbrace \move\ t \, \shape(P) \rsbrace$. We say that
a basic process $S[B]$ is {\em well-formed} iff the shape $S$ is well-formed and, for each $X
\subseteq \pos$ that occurs in $B$, \ $\reference(X, \referencepoint{S} ) \subseteq \boundary{S}$. A
compound process $P \unionc{a,X} Q$ is well-formed iff $P$ and $Q$ are well-formed, $\velocity{P} =
\velocity{Q}$ and the site $X$ expressed w.r.t a global coordinate system is a non-empty subset of
$\boundary{P} \cap \boundary{Q}$. Note that this also means that the set $\points{P} \cap
\points{Q}$ is non-empty and equal to $\boundary{P} \cap \boundary{Q}$. Later on in this paper we
only consider well-formed processes.
\end{definition}

We can state the following proposition as an easy consequence of shapes and 3D processes
well-formedness.

\begin{proposition}
\label{prop:well-formed}
For each $P \in \proc$ well-formed, $\shape(P)$ is well-formed.
\end{proposition}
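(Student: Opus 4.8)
The plan is to proceed by structural induction on the 3D process $P$, checking in each case that the shape produced by $\shape(\cdot)$ satisfies the three clauses of Definition~\ref{def:shapeswf}. Throughout I will use the identities $\boundary{P}=\boundary{\shape(P)}$ and $\velocity{P}=\velocity{\shape(P)}$ recorded in Definition~\ref{def:processes}, together with the analogous reading $\points{P}=\points{\shape(P)}$, so that each geometric condition stated for $P$ can be transported to $\shape(P)$. In the base case $P=S[B]$ we have $\shape(P)=S$, and well-formedness of the basic process $S[B]$ requires by definition that the shape $S$ be well-formed; there is nothing more to prove.

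In the inductive step $P=P_1\unionc{a,X}P_2$ we have $\shape(P)=\shape(P_1)\union{X}\shape(P_2)$. Well-formedness of $P$ gives that $P_1$ and $P_2$ are well-formed, and the induction hypothesis then yields that $\shape(P_1)$ and $\shape(P_2)$ are well-formed; this discharges clause~1. For clause~3, well-formedness of $P$ gives $\velocity{P_1}=\velocity{P_2}$, hence $\velocity{\shape(P_1)}=\velocity{\shape(P_2)}$; since both component shapes are well-formed their velocities are singletons, so $\velocity{\shape(P_1)\union{X}\shape(P_2)}=\velocity{\shape(P_1)}\cup\velocity{\shape(P_2)}$ is the same singleton, exactly as required by Definition~\ref{def:shapeswf}.

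The one clause that needs genuine care is clause~2, which demands that $X=\points{\shape(P_1)}\cap\points{\shape(P_2)}$, that this set be non-empty, and that it equal $\boundary{\shape(P_1)}\cap\boundary{\shape(P_2)}$. Well-formedness of $P$ asserts directly only that $X$ is a non-empty \emph{subset} of $\boundary{P_1}\cap\boundary{P_2}$, so after translating through $\boundary{P_i}=\boundary{\shape(P_i)}$ the inclusion $X\subseteq\boundary{\shape(P_1)}\cap\boundary{\shape(P_2)}$ and non-emptiness are immediate; the delicate point is the reverse inclusion, i.e.\ that the bond label $X$ exhausts the whole surface of contact. Here I would appeal to the remark following Definition~\ref{def:processes}, namely that well-formedness forces $\points{P_1}\cap\points{P_2}$ to be non-empty and equal to $\boundary{P_1}\cap\boundary{P_2}$ (this is precisely the non-interpenetration condition of Definition~\ref{def:shapeswf}, since $\boundary{S}\subseteq\points{S}$ gives one inclusion for free and non-interpenetration gives the other). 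Identifying the label $X$ with this common surface of contact, and translating through $\points{P_i}=\points{\shape(P_i)}$, then yields $X=\points{\shape(P_1)}\cap\points{\shape(P_2)}=\boundary{\shape(P_1)}\cap\boundary{\shape(P_2)}$. I expect this identification of the bond label $X$ with the full geometric surface of contact to be the only obstacle, and it is exactly the place where the definition of well-formed compound processes must be read as pinning $X$ to the contact set rather than to a proper subset of it.
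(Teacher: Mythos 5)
Your induction is sound and is essentially the only argument available: the paper offers no explicit proof of this proposition, introducing it merely as ``an easy consequence of shapes and 3D processes well-formedness'', and your structural induction is the natural reconstruction of that claim, with clauses~1 and~3 of Definition~\ref{def:shapeswf} discharged exactly as intended (base case trivial, velocities a common singleton by the induction hypothesis). The real content of your write-up is the caveat about clause~2, and you are right to flag it: as literally written, Definition~\ref{def:processes} demands only that the bond label $X$ be a non-empty \emph{subset} of $\boundary{P_1}\cap\boundary{P_2}$, and the trailing remark adds non-interpenetration, i.e.\ $\points{P_1}\cap\points{P_2}=\boundary{P_1}\cap\boundary{P_2}$; neither forces $X$ to exhaust the contact surface. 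Taken literally, then, a process $P_1\unionc{a,X}P_2$ whose label $X$ is a proper non-empty subset of the contact region is well-formed, yet $\shape(P_1)\union{X}\shape(P_2)$ violates the equality $X=\points{\shape(P_1)}\cap\points{\shape(P_2)}$ required by clause~2 of Definition~\ref{def:shapeswf}, so the proposition would fail on such a term. Your reading --- that well-formedness pins $X$ to the whole geometric contact set --- is what the authors implicitly assume, and it is consistent with the dynamics: bonds are only ever created by rule \name{inel}, where the label $X_p\cap X_q$ is the surface of contact of the colliding tuple of Definition~\ref{def:colls-information}, and note that elsewhere (e.g.\ the proofs of Propositions~\ref{prop:closure-3d} and~\ref{prop:splitwf} in Appendix~B) the paper likewise works only with the subset condition without ever re-deriving the equality. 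In short: the induction is correct, the identification you single out is genuine slack in the paper's definitions rather than a defect of your argument, and under the intended reading your proof is complete.
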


Let us model the molecules involved in the reaction of~Example~\ref{sec:example} as 3D
processes.

\begin{example}[3D Processes for $HEX$, $GLC$ and $ATP$]\label{ex:running}
An Hexokinase molecule is modeled as $HEX \eqdef S_h[\mathsf{HEX}]$ where:

\smallskip

\noindent $\mathsf{HEX} = \langle \mathsf{atp}, X_{ha} \rangle.\mathsf{HA} + \langle
\mathsf{glc}, X_{hg} \rangle.\mathsf{HG}$,

\noindent $\mathsf{HA} =  \omega({\sf atp}, X_{ha}) . {\sf HEX} + \epsilon(t_h). \langle {\sf glc},
X_{hg} \rangle. \rho(\{ \langle {\sf atp}, X_{ha} \rangle, \langle {\sf glc}, Y_{hg} \rangle \}).
{\sf HEX} $,

\noindent $\mathsf{HG} = \omega({\sf glc}, X_{hg}) . {\sf HEX} + \epsilon(t_h).\langle {\sf atp},
X_{ha} \rangle . \rho(\{ \langle {\sf atp}, X_{ha} \rangle, \langle {\sf glc}, Y_{hg} \rangle \}).
{\sf HEX}$,

\smallskip

\noindent and $X_{ha}, Y_{hg}$ are the surfaces of contact shown in Fig.~\ref{fig:glyco}. $ATP =
S_a[\mathsf{ATP}]$ models an ATP molecule where:

\smallskip

${\sf ATP}
= \langle \overline{{\sf atp}}, X_{ah} \rangle . ( \epsilon(t_a) . \rho(\{\langle \overline{{\sf
atp}}, X_{ah} \rangle\}) . {\sf ADP} + \omega(\overline{{\sf atp}}, X_{ah}) . {\sf ATP})$

\smallskip

\noindent and the surface of contact $X_{ah}$ is the whole boundary $\boundary{S_a}$.The process
modelling a molecule of glucose is similar: $GLC = S_g[\mathsf{GLC}]$ where

\smallskip

$\mathsf{GLC} = \langle
\overline{\mathsf{glc}}, X_{gh} \rangle . ( \epsilon(t_g) . \rho(\{\langle \overline{{\sf glc}},
X_{gh} \rangle\}) . \mathsf{G6P} + \omega(\overline{\mathsf{glc}}, X_{gh}) . \mathsf{GLC}$

\smallskip

\noindent We leave unspecified the behaviours $\mathsf{G6P}$ and $\mathsf{ADP}$.

${\sf HEX}$ has two channels $\langle {\sf atp}, X_{ha} \rangle$ and
$\langle {\sf glc}, Y_{hg}\rangle$ to bind, resp., with an ATP and a GLC molecule.
By performing an action $\langle \mathsf{atp}, X_{ha} \rangle$, ${\sf HEX}$ evolves in ${\sf HA}$.
${\sf HA}$ can perform either a weak-split action $\omega({\sf atp}, X_{ha})$ to come back to
${\sf HEX}$, or can wait at most $t_h$ units of time, perform $\langle \mathsf{glc}, Y_{hg}
\rangle$ and then evolve in $\rho(\{ \langle \mathsf{atp}, X_{ha} \rangle, \langle \mathsf{glc},
Y_{hg} \rangle \}). \mathsf{HEX}$. Now, two strong-split actions are enabled after which we come
back to ${\sf HEX}$. Notice that, after an action $\langle \mathsf{glc}, Y_{hg} \rangle$, {\sf HEX} becomes {\sf HG} that behaves symmetrically.

An ATP molecule performs a $\langle \overline{\mathsf{atp}}, X_{ah} \rangle$-action, waits $t_{r}$
units of time, and then can release the bond established on the channel $\langle \overline{{\sf
atp}}, X_{ah} \rangle$ -- and thus return free as ATP  -- or can participate in the reaction and become an ADP. As we will see in Section~\ref{sec:networks}, the result is the split of the complex in the three original shapes whose behaviours are {\sf HEX}, {\sf ADP} and {\sf G6P}, resp. We omit the description of the behaviour of ${\sf GLC}$ since it is similar to that of ${\sf ATP}$.
\end{example}

We are ready to define the timed operational semantics of 3D processes.

\begin{definition}[Transitional semantics of 3D processes]
\label{def:tbehprocesses}
Rules in Table~\ref{table:proc-behaviour} define the transition relations $\wnar{t} \subseteq (\proc
\times \proc)$ for $t \in \timedomain$, and $\nar{\mu} \subseteq \proc \times \proc$ for  $\mu
\in \Act$.
Two 3D processes $P$ and $Q$ are said to be {\em compatible}, written $P \comp Q$, if $P\nar{\langle
\alpha, X\rangle}$  and $Q\nar{\langle \overline{\alpha}, Y\rangle}$ for some compatible channels
$\langle \alpha, X\rangle$ and $\langle \overline{\alpha}, Y\rangle$;
otherwise, $P$ and $Q$ are {\em incompatible} that we denote with $P \not\comp Q$.
Below, we often write $P \not \nar{\rho}$ and $P \not\nar{\omega}$ as shorthand for
$P\not\nar{\rho(\alpha, X)}$ and $P\not\nar{\omega(\alpha, X)}$, resp., for any $\langle \alpha, X
\rangle$.
\end{definition}

Essentially, rules in Table~\ref{table:proc-behaviour} say that a 3D process inherits its functional and temporal behaviour from the $\bpa$-terms defining its internal behaviour. But now sites of binding capabilities and split actions are expressed w.r.t.\ a {\em global} coordinate system (see rules~\name{Basic$_c$} and~\name{Basic$_s$}). For simplicity, we have omitted a rule defining which weak-split action a basic process can perform. This can be obtained from rule~\name{Basic$_w$} by replacing each $\rho()$-action with a corresponding $\omega()$-action. It is worth noting that, due to rule \name{Comp$_{a2}$}, some of the $\langle \alpha, Y \rangle$-actions performed by $P$ (by $Q$) can be prevented in $P \unionc{a, X} Q$ since, due to binding a part (or all) of $Y$ has became interior because it is covered by a piece of $Q$ (of $P$ respectively) the surface of contact $Y \not \subseteq \boundary{P \unionc{a, X} Q}$ and, hence the corresponding channel is no more active. We have also omitted symmetric rules for~\name{Comp$_{a1}$} and~\name{Comp$_{a2}$} for the actions of $Q$.

The following proposition (see Appendix B for the proof) shows that 3D processes
well-formedness is closed w.r.t.\ transitions $\wnar{t}$ and $\nar{\mu}$.

\begin{proposition}\label{prop:closure-3d}
Let $P \in \proc$ well-formed. Either $P \wnar{t} Q$ or $P \nar{\mu} Q$ implies $Q \in \proc$
well-formed.
\end{proposition}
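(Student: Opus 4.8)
The plan is to prove the statement by induction on the structure of $P$, performing in each case an analysis of the last rule applied in the derivation of the transition $P \wnar{t} Q$ or $P \nar{\mu} Q$. Two quantities must be tracked through every transition: the well-formedness of the underlying shape $\shape(Q)$ (for which Propositions~\ref{prop:shapes-wellformedness} and~\ref{prop:well-formed} do most of the work) and the geometric side conditions of Definition~\ref{def:processes}, namely that every site occurring in a basic behaviour lies on the boundary of its shape, and that the surface of contact $X$ of a compound process is a non-empty subset of $\boundary{P}\cap\boundary{Q}$ with $\velocity{P}=\velocity{Q}$. Two observations make the induction go through: \emph{(i)} no behavioural rule in Tables~\ref{table:beh-temporal} and~\ref{table:beh-functional} introduces a fresh site, so the set of sites occurring in the residual $B'$ is always contained in the set occurring in $B$ (up to unfolding of process constants via \name{Def$_t$}/\name{Def$_a$}, whose defining bodies are themselves assumed well-formed); and \emph{(ii)} action transitions $\nar{\mu}$ leave the shape untouched, so $\shape(Q)=\shape(P)$ and all boundaries and velocities are frozen.

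First I would treat the base case $P=S[B]$. For an action transition $S[B]\nar{\mu}S[B']$ (rules \name{Basic$_c$}, \name{Basic$_s$} and their weak-split variant), the shape is unchanged, so $S$ stays well-formed and, by observation \emph{(i)}, every site $X$ of $B'$ still satisfies $\reference(X,\referencepoint{S})\subseteq\boundary{S}$; hence $S[B']$ is well-formed. For a temporal transition $S[B]\wnar{t}(S+t)[B']$ the shape translates rigidly by $t\cdot\velocity{S}$: Proposition~\ref{prop:shapes-wellformedness} gives $S+t$ well-formed, while the identities $\boundary{S+t}=\boundary{S}+t\cdot\velocity{S}$ and $\referencepoint{S+t}=\referencepoint{S}+t\cdot\velocity{S}$ (both immediate from Definition~\ref{def:shapemoving} and the rigid motion of a well-formed shape) turn the site condition into $\reference(X,\referencepoint{S+t})=\reference(X,\referencepoint{S})+t\cdot\velocity{S}\subseteq\boundary{S}+t\cdot\velocity{S}=\boundary{S+t}$, again for each site of $B'$ by \emph{(i)}.

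For the inductive step $P=P_1\unionc{a,X}P_2$, the compound transition is driven by a single component. An action step (rules \name{Comp$_{a1}$}, \name{Comp$_{a2}$}) has the form $P_1\unionc{a,X}P_2\nar{\mu}P_1'\unionc{a,X}P_2$ with $P_1\nar{\mu}P_1'$; by the induction hypothesis $P_1'$ is well-formed, and by \emph{(ii)} the geometry is frozen, so $\velocity{P_1'}=\velocity{P_1}=\velocity{P_2}$ and $X\subseteq\boundary{P_1'}\cap\boundary{P_2}$ remain valid — the side conditions that may \emph{block} such a step never alter the residual, so well-formedness is unaffected. A temporal step has the form $P_1\unionc{a,X}P_2\wnar{t}P_1'\unionc{a,X+t\cdot\velocity{P_1}}P_2'$ with $P_1\wnar{t}P_1'$ and $P_2\wnar{t}P_2'$, mirroring the composition clause of Definition~\ref{def:shapemoving}: the induction hypothesis gives $P_1'$ and $P_2'$ well-formed, a temporal step changes no velocity so $\velocity{P_1'}=\velocity{P_1}=\velocity{P_2}=\velocity{P_2'}$, and since $\velocity{P_1}=\velocity{P_2}$ both components and the contact surface translate by the \emph{same} vector $t\cdot\velocity{P_1}$, whence $X+t\cdot\velocity{P_1}\subseteq(\boundary{P_1}\cap\boundary{P_2})+t\cdot\velocity{P_1}=\boundary{P_1'}\cap\boundary{P_2'}$, which remains non-empty.

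The part I expect to demand the most care is the surface-of-contact bookkeeping in the compound temporal case: showing that the translated site still sits on the common boundary of the two moved components rests essentially on the velocity-equality clause of well-formedness, which guarantees a rigid co-translation of $P_1$, $P_2$ and $X$; without it the contact relation could break. A minor but real subtlety is the precise reading of ``site occurring in $B$'' in observation \emph{(i)} for process constants $K\equaldef B_0$: well-formedness must be understood relative to the unfolded body $B_0$, so that rules \name{Def$_t$}/\name{Def$_a$} preserve it. Everything else reduces to the rigid-translation identities for $\boundary{S+t}$ and $\referencepoint{S+t}$ together with the already-established Propositions~\ref{prop:shapes-wellformedness} and~\ref{prop:well-formed}.
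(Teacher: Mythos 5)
Your proposal is correct and follows essentially the same route as the paper's own proof: a structural induction on $P$ split into temporal and action cases, resting on the rigid-translation identities for $\boundary{S+t}$ and $\referencepoint{S+t}$ (the paper's Lemma~\ref{lemma:shape-properties1}), the fact that action transitions leave the shape untouched, and the observation that the sites of the residual behaviour are contained in those of the original. The only organisational difference is that the paper isolates the shape-evolution facts ($\shape(Q)=\shape(P)+t$ for $\wnar{t}$ and $\shape(Q)=\shape(P)$ for $\nar{\mu}$) as a preliminary induction before the well-formedness argument, whereas you fold them into the main induction; this is immaterial to correctness.
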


\begin{table}[tbp]
$$
\begin{array}{|c|}
\hline
\name{Basic$_t$}\sos{ B \wnar{t} B'}{
S[B] \wnar{t} (S+t)[B']} \\
\name{Comp$_t$} \sos{P \wnar{t} P' \quad Q \wnar{t} Q'  \quad X' = X + (t \cdot \velocity{P})}
{P \unionc{a, X} Q \wnar{t} P' \unionc{a, X'} Q'}\\ \\
\name{Basic$_{c}$}\sos{%
B \nar{\langle \alpha, X \rangle} B' \qquad Y  = \reference(X, \referencepoint{S})}
{ S[B] \nar{\langle \alpha, Y \rangle} S[B']} \\
\name{Basic$_{s}$}\sos{B \nar{\rho(\alpha, X)} B' \qquad Y = \reference(X, \referencepoint{S})}
{ S[B] \nar{\rho(\alpha, Y)} S[B']}\\
 \name{Comp$_s$} \sos{P \nar{\rho(\alpha, Y)} P'} {P
\unionc{a, X} Q \nar{\rho(\alpha, Y)} P' \unionc{a, X} Q} \\
\name{Comp$_w$} \sos{P \nar{\omega(\alpha, Y)} P'} {P
\unionc{a, X} Q \nar{\omega(\alpha, Y)} P' \unionc{a, X} Q} \\
\name{Comp$_{c}$} \sos{P \nar{\langle \alpha, Y \rangle } P' \qquad Y \subseteq \boundary{P
\unionc{a,X} Q}} {P \unionc{a, X} Q \nar{\langle \alpha, Y \rangle} P'
\unionc{a, X} Q}\\
\hline
\end{array}
$$
\caption{Functional and temporal behaviour of $\proc$-terms}
\label{table:proc-behaviour}
\end{table}

At this stage a key observation is that the operational rules in Table~\ref{table:proc-behaviour} do
not allow synchronization between components of compound process that proceed independently to each
other. Consider, as an example,
\begin{description}
\item $P = S_p [\rho(\{a, X_p\}). B_p]$,
\item $Q =S_q[\rho(\{\overline{a},X_q\}). B_q]$, and  $P\unionc{a,X} Q$
\end{description}
where $X = X'_p \cap X'_q$ and, for each $i \in \{p,q\}$, $X'_{i}$ is the site $X_i$ w.r.t.\ a global coordinate system, i.e. $X'_{i} = \reference(X_i, \referencepoint{S_i})$. As stand-alone processes, $P$ and $Q$ can perform two compatible strong-split actions, namely $\rho(a, X'_p)$ and $\rho(\overline{a},X'_q)$ and evolve, resp., in $S_p[B_p]$ and $S_q[B_q]$. As a consequence, $P \unionc{a,X} Q$ becomes either $S_p[B_p] \unionc{a,X} Q$ or $P \unionc{a,X} S_p[B_q]$.

But these actions are compatible and, according to the intuition given so far, $P$ and $Q$ have to synchronize on their execution in order to split the bond $\ch{a}{X}$. In other terms, a strong-split operation is enabled; such an operation must be performed before time can pass further and must produce as a result two independent 3D processes, i.e.\ the {\em network} of 3D processes (see Section~\ref{sec:networks}) that contains both $S_p[B_p]$ and $S_q[B_q]$. Similarly, we would allow synchronizations between compatible weak-split action in order to perform a weak-split operation. To properly deal with this kind of behaviours some technical details are still needed. We first allow synchronization on compatible split actions by introducing the transition relations  $\dnar{\rho(a, X)}$ and $\dnar{\omega(a,X)}$. Intuitively, we want that  $P \unionc{a, X} Q \dnar{\rho(a, X)} S_p[B_p] \unionc{a, X} S_q [B_q]$. Now, we can `physically' remove the bond $\langle a,X \rangle$ (this will be done by exploiting the function $\Split$ over 3D processes we provide in the next section) and obtain the network of processes we are interested in.

\begin{definition}[Semantics of strong and weak splits] \label{def:splittings}
The SOS-rules \\ \noindent that define the transition relations $\dnar{\rho(a, X)} \subseteq \proc
\times \proc$ where $\rho(\alpha,X) \in \rho(\channels)$ are given in Table~\ref{table:react-rules}.
As usual, symmetric rules have been omitted. We also omit the rules defining the transition
relations $\dnar{\omega(a,X)} \subseteq \proc \times \proc$ for $\omega(a, X) \in
\omega(\channels)$ since these  can be obtained from those in Table~\ref{table:react-rules}
by replacing each action $\rho(\mbox{-})$ with the corresponding action $\omega(\mbox{-})$.
\end{definition}

\begin{table}[bh]
$$
\begin{array}{|c|}
\hline
\name{StrSync} \sos{%
P \nar{\rho(\alpha, X_p) } P' \quad
Q \nar{\rho(\overline{\alpha}, X_q) } Q' \quad
\alpha \in \{a, \overline{a}\} \quad
X = X_p \cap X_q
}
{P \unionc{a, X} Q \dnar{\rho(a, X)} P' \unionc{a, X} Q'}\\
\name{StrPar} \sos{P \dnar{\rho(b, Y)} P'}
{P \unionc{a, X} Q \dnar{\rho(b, Y)} P' \unionc{a, X} Q} \\
\hline
\end{array}
$$
\caption{Transitional semantics for strong-split actions}
\label{table:react-rules}
\end{table}

Recall that strong-split operations require simultaneous split of multiple bonds. In
this case, all the components involved in the reaction must {\em all together} be ready to
synchronize on a proper set of compatible strong-split actions. Consider a more complex example
\begin{description}
\item $P = S_p[\rho(\{\ch{a}{X_p}, \ch{\overline{b}}{Y_p} \}). B_p]$,
\item $Q = S_q[\rho(\{\ch{\overline{a}} {X_q}\}).B_q]$,
\item $R = S_r[\rho(\{\ch{b}{Y_r} \}). B_r]$, and $(P\unionc{a,X} Q) \unionc{b,Y} R$
\end{description}

where $X = X'_p \cap X'_q$ and $Y = Y'_p \cap Y'_r$ (also in this case we write $X'_i$, for $i \in
\{p, q,r\}$, to represent the site $X_i$ in a global coordinate system).  $P$ can
synchronize with $Q$ and $R$ to split, resp., the bonds $\langle a, X \rangle$ and $\langle b, Y
\rangle$. Indeed, rules in Table~\ref{table:react-rules} implies that
$$
\begin{array}{l c l}
(P\unionc{a,X} Q) \unionc{b,Y} R & \dnar{\rho(a, X)} &
(S_p[\rho(\{\ch{\overline{b}}{Y_p} \}). B_p]\unionc{a,X} S_q[B_q]) \unionc{b,Y} R \\
& \dnar{\rho(b,Y)} & (S_p[B_p]\unionc{a,X} S_q[B_q]) \unionc{b,Y} S_r[B_r]
\end{array}
$$
After that, all `pending strong-split requests' of $(P \unionc{a,X} Q) \unionc{b,Y} R$ are
satisfied. We say that such a compound process is {\em able to complete a reaction}. If it was, for instance, $R = S_r[\epsilon(t).\rho(\{\ch{\overline{b}}{Y_r} \}). B_r]$ then $(P \unionc{a,X} Q) \unionc{b,Y} R$ would {\em not} have been able to complete a reaction, since (at least) one involved component, i.e.\ $R$, would not have been able to contribute to the reaction before than $t$ units of time. In such a case, the bonds can not be spit at all. This concept is formalized by the definition below.

\begin{definition}[Bonds of $\proc$-terms]\label{def:bonds}
The function $\bonds : \proc \rightarrow \wp(\channels)$ returns the set of bonds that are currently
established in $P$. It can be defined by induction on $P\in \proc$:
\begin{tabbing}
Basic: \quad \= $\bonds(S[B]) = \emptyset$\\
Comp: \> $\bonds(P \unionc{a,X} Q) = \bonds(P) \cup
\bonds(Q) \cup \{\langle a, X \rangle \}$
\end{tabbing}

By an easy induction on $P$ we can prove that $P \dnar{\rho(a, X)} $ implies $\langle a, X \rangle
\in \bonds(P)$. Moreover, we say that $P \in \proc$ is {\em able to complete a reaction}, which we
write as $P\searrow$, iff either (1) $P\not\nar{\rho}$, or (2) $P \dnar{\rho(a, X)} Q$ for some
$\rho(a, X)$ and $Q$ such that $Q \searrow$.
\end{definition}

Finally, if $P$ is able to complete a reaction and there is at a least a bond that has to be
strongly split (i.e.\ if $P \not\nar{\rho}$ does not hold), a reaction can actually take place
and, as a consequence, time cannot pass further. Below we restrict the timed operational
semantics of 3D processes as it has been defined in Def.~\ref{def:tbehprocesses} in order to take
this aspect into account.

\begin{definition}\label{def:strong-timed-behaviour}
Let $P\in\proc$. We say that $P \nar{t} Q$ if $P \wnar{t} Q$ and either $P \not\nar{\rho}$ or $P$
is {\em not} able to complete a reaction.
\end{definition}

\noindent Proposition~\ref{prop:splitwf} states that the function $\Split$ is well-defined up to
structural congruence over 3D processes we define below.

\begin{definition}[Structural congruence over 3D processes] \label{def:congruence3d}
We defi-\\ne the structural congruence over processes in $\proc$, which we denote by
$\equiv_P$, as the smallest relation that satisfies the following axioms:
\begin{description}

\item [-] $S[B] \equiv_P S'[B]$  provided that $S \equiv_S S'$;

\item [-] $P \unionc{a,X} Q  \equiv_P Q \unionc{a,X} P$;

\item [-] $P \equiv_P Q$ implies $P \unionc{a,X} R  \equiv_P Q \unionc{a,X} R$;

\item  [-] $Y \subseteq \boundary{Q} \cap \boundary{R}$ implies $(P \unionc{a,X} Q) \unionc{b, Y} R \equiv_P P \unionc{a,X} (Q \unionc{b, Y} R)$.
\end{description}
\end{definition}
\begin{proposition}\label{prop:splitwf}
Let $P \in \proc$ well-formed. If $\langle a, X \rangle \in \bonds(P)$ there is a well-formed 3D
process $Q \unionc{a, X} R \equiv_P P$.
\end{proposition}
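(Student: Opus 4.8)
The plan is to proceed by structural induction on $P$, following the grammar of Definition~\ref{def:processes}. The two workhorses are the commutativity and associativity axioms of $\equiv_P$ (Definition~\ref{def:congruence3d}): together they let me re-root the composition tree of $P$ so that the chosen bond becomes the outermost one. A preliminary observation I would record first is that $\equiv_P$ preserves $\shape$ up to $\equiv_S$, and hence preserves $\boundary{P}$, $\points{P}$ and $\velocity{P}$, and — by a direct check of the four axioms — also preserves well-formedness (the side condition on the associativity axiom being exactly what keeps the regrouped term well-formed). Consequently it suffices to exhibit a \emph{chain} of $\equiv_P$-rewrites starting from the well-formed $P$ and ending in a term of the shape $Q \unionc{a, X} R$: well-formedness of the result, and of its sub-processes, then comes for free.

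For the base case $P = S[B]$ we have $\bonds(P) = \emptyset$, so the hypothesis $\ch{a}{X} \in \bonds(P)$ is never met and the statement holds vacuously. For the inductive step $P = P_1 \unionc{b, Y} P_2$, the identity $\bonds(P) = \bonds(P_1) \cup \bonds(P_2) \cup \{\ch{b}{Y}\}$ yields three cases. If $\ch{a}{X} = \ch{b}{Y}$ we are already done with $Q = P_1$ and $R = P_2$. If $\ch{a}{X} \in \bonds(P_1)$ (the case $\ch{a}{X} \in \bonds(P_2)$ being symmetric after commuting $P_1 \unionc{b,Y} P_2 \equiv_P P_2 \unionc{b,Y} P_1$), then $P_1$ is well-formed, so the induction hypothesis gives a well-formed $Q_1 \unionc{a, X} R_1 \equiv_P P_1$. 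Applying the congruence axiom inside $P$ I obtain $P \equiv_P (Q_1 \unionc{a, X} R_1) \unionc{b, Y} P_2$, and it remains only to push $\ch{a}{X}$ to the top.

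The re-rooting is one application of the associativity axiom, $(Q_1 \unionc{a, X} R_1) \unionc{b, Y} P_2 \equiv_P Q_1 \unionc{a, X} (R_1 \unionc{b, Y} P_2)$, whose side condition is $Y \subseteq \boundary{R_1} \cap \boundary{P_2}$; if instead the outer contact sits on the other inner component, commuting $Q_1 \unionc{a, X} R_1$ to $R_1 \unionc{a, X} Q_1$ first routes the same step through $Q_1$ under the dual condition $Y \subseteq \boundary{Q_1} \cap \boundary{P_2}$. In either case the outermost bond of the resulting term is $\ch{a}{X}$, and setting $Q := Q_1$, $R := R_1 \unionc{b, Y} P_2$ (or their commuted variants) finishes the step. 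I expect the one genuinely delicate point — and the main obstacle — to be discharging this side condition from well-formedness. Concretely, I would isolate a geometric lemma asserting that the outer contact site $Y \subseteq \boundary{P_1} \cap \boundary{P_2}$ does not \emph{straddle} the inner split, i.e.\ that $Y$ lies entirely inside $\boundary{Q_1} \cap \boundary{P_2}$ or entirely inside $\boundary{R_1} \cap \boundary{P_2}$. Its proof would start from the identity $\boundary{P_1} = \boundary{Q_1 \unionc{a, X} R_1} = (\boundary{Q_1} \cup \boundary{R_1}) \setminus \{\mathbf{x} \mid \mathbf{x} \text{ is interior of } \points{Q_1} \cup \points{R_1}\}$ of Definition~\ref{def:shapes} (so every point of $Y$ already lies on $\boundary{Q_1}$ or on $\boundary{R_1}$) and then exploit the non-interpenetration clause and the equality $\points{P'} \cap \points{P''} = \boundary{P'} \cap \boundary{P''}$ of well-formedness to rule out $Y$ being distributed across both components. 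This is where the geometric content of the statement concentrates; the remaining bookkeeping with $\equiv_P$ is routine.
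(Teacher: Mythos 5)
Your proposal is, in structure, the paper's own proof: the same induction on $P$, the same vacuous basic case, the same three-way split on whether $\ch{a}{X}$ is the top bond or lies in $\bonds(P_1)$ or (symmetrically) $\bonds(P_2)$, and the same single associativity step after invoking the induction hypothesis, with your two routings corresponding exactly to the paper's subcases $Y \subseteq \boundary{R_1}$ and $Y \subseteq \boundary{Q_1}$. The one organizational difference is that the paper verifies well-formedness of the resulting decomposition directly --- its items (i)--(iii), in particular two contradiction arguments deriving $Y \subseteq \boundary{R_1} \cap \boundary{P_2}$ and $X \subseteq \boundary{Q_1} \cap \boundary{R}$ from non-interpenetration --- whereas you fold all of this into an asserted meta-lemma that $\equiv_P$ preserves well-formedness. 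That packaging is legitimate, but it is not a ``direct check'': for the associativity axiom, proving the meta-lemma \emph{is} carrying out those two contradiction arguments, so well-formedness does not come for free; you have relocated the paper's geometric work, not discharged it.

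The genuine gap is the no-straddling lemma that you yourself single out as the main obstacle: it does not follow from well-formedness, and as stated it is false. Take $Q_1$, $R_1$, $P_2$ basic processes (behaviour $\nil$, equal velocities, unit masses) whose shapes are the boxes $[0,1]^3$, $[1,2]\times[0,1]^2$ and $[0,2]\times[0,1]\times[-1,0]$, and let $P = (Q_1 \unionc{a,X} R_1) \unionc{b,Y} P_2$ with $X = \{1\}\times[0,1]^2$ and $Y = [0,2]\times[0,1]\times\{0\}$. Every clause of well-formedness holds, including non-interpenetration and the equality $\points{P_1}\cap\points{P_2} = \boundary{P_1}\cap\boundary{P_2}$ for $P_1 = Q_1 \unionc{a,X} R_1$ --- precisely the ingredients your sketched proof intends to use --- yet $Y \not\subseteq \boundary{Q_1}$ and $Y \not\subseteq \boundary{R_1}$, so the associativity side condition fails under both routings, and in fact no well-formed term of the form $Q \unionc{a,X} R$ congruent to $P$ exists at all. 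To be fair, the paper is in exactly the same position: its proof simply announces ``we have two possible subcases'' without proving the dichotomy exhaustive, so your attempt is no weaker than the published argument, and you deserve credit for identifying where the geometric content concentrates. But closing the hole requires an additional hypothesis rather than a cleverer derivation from non-interpenetration --- for instance, restricting attention to processes in which every bond site lies on the boundary of a single basic shape on each side, which holds for reachable processes because rule \name{inel} creates bonds whose sites are intersections of channel sites, and channel sites are lifted unchanged from well-formed basic processes where they lie on the boundary of one basic shape.
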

We also need the following closure result.
\begin{proposition}\label{prop:closure-splittings}
Let $P, Q\in \proc$  and $\mu \in \omega(\channels) \cup \rho(\channels)$. Then:

\noindent
1. $P \dnar{\mu} Q$ implies $\shape(Q) = \shape(P)$;

\noindent
2. $P$ well-formed and $P\dnar{\mu} Q$ implies $Q$ well-formed.
\end{proposition}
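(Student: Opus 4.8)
The plan is to prove both claims simultaneously by induction on the derivation of $P \dnar{\mu} Q$, relying on a preliminary observation about the single split-action transitions $\nar{\nu}$ (with $\nu \in \omega(\channels) \cup \rho(\channels)$) that feed into rule \name{StrSync}. The auxiliary fact I would record first is that such single transitions never touch the shape: if $P \nar{\nu} P'$ then $\shape(P') = \shape(P)$. This is immediate by inspecting the only rules producing a split-labelled transition in Table~\ref{table:proc-behaviour}, namely \name{Basic$_s$} (and its omitted weak-split analogue), which rewrites $S[B]$ into $S[B']$ leaving $S$ untouched, and \name{Comp$_s$}/\name{Comp$_w$}, which rewrite only one component and keep the same surface of contact; a trivial structural induction propagates the invariance through the compound case. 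Crucially, Proposition~\ref{prop:closure-3d} already guarantees that these single transitions also preserve well-formedness, so both ingredients I need about $\nar{\nu}$ are in place.

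For Part 1 I would induct on $P \dnar{\mu} Q$. In the base case \name{StrSync} we have $P \unionc{a,X} Q \dnar{\rho(a,X)} P' \unionc{a,X} Q'$ with $P \nar{\rho(\alpha,X_p)} P'$ and $Q \nar{\rho(\overline{\alpha},X_q)} Q'$; by the auxiliary fact $\shape(P') = \shape(P)$ and $\shape(Q') = \shape(Q)$, and since the composition retains the same label $a$ and the same surface of contact $X$, we obtain $\shape(P' \unionc{a,X} Q') = \shape(P') \union{X} \shape(Q') = \shape(P) \union{X} \shape(Q) = \shape(P \unionc{a,X} Q)$. The inductive case \name{StrPar} is analogous, using the induction hypothesis $\shape(P') = \shape(P)$ on the premise $P \dnar{\rho(b,Y)} P'$. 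The weak-split rules are identical with $\rho$ replaced by $\omega$. The essential point is that a $\dnar{\mu}$ step only records the synchronization inside the behaviours; the physical removal of the bond is deferred to the $\Split$ function, so the gluing structure, and hence the shape, is literally unchanged.

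For Part 2 I would run the same induction, now assuming $P$ well-formed. In \name{StrSync}, well-formedness of $P \unionc{a,X} Q$ gives (by Definition~\ref{def:processes}) that $P$ and $Q$ are well-formed, that $\velocity{P} = \velocity{Q}$, and that $X$ in global coordinates is a non-empty subset of $\boundary{P} \cap \boundary{Q}$. By Proposition~\ref{prop:closure-3d}, $P'$ and $Q'$ are well-formed. Because shapes are preserved (Part 1), we have $\velocity{P'} = \velocity{P}$, $\velocity{Q'} = \velocity{Q}$, $\boundary{P'} = \boundary{P}$ and $\boundary{Q'} = \boundary{Q}$; hence $\velocity{P'} = \velocity{Q'}$ and $X$ is still a non-empty subset of $\boundary{P'} \cap \boundary{Q'}$. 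The compound well-formedness clause of Definition~\ref{def:processes} then yields that $P' \unionc{a,X} Q'$ is well-formed. The case \name{StrPar} is the same: $Q$ stays well-formed and untouched, the induction hypothesis gives $P'$ well-formed with $\shape(P') = \shape(P)$, and the velocity-agreement and boundary-containment conditions of $P' \unionc{a,X} Q$ transfer verbatim from those of $P \unionc{a,X} Q$.

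The main obstacle is conceptual rather than computational: one must notice that the well-formedness of the compound result hinges entirely on Part 1. Only because the shape is invariant do the velocity-agreement and boundary-containment conditions survive unchanged; if a $\dnar{\mu}$ step altered the geometry, re-establishing these would require genuine geometric reasoning about the new surfaces of contact. Thus Part 1 must be proved first and used as the engine for Part 2, and the correct reading of rules \name{StrSync} and \name{StrPar}, namely that they defer the actual bond removal, is precisely what makes Part 1 trivial.
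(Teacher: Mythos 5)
Your proposal is correct and takes essentially the same route as the paper's own proof: an induction whose base case is \name{StrSync} and whose step case is \name{StrPar} (the paper phrases it as structural induction on $P$, with the basic case $S[B]$ vacuous, but the case analysis is identical), with Proposition~\ref{prop:closure-3d} supplying well-formedness of the premises and the shape-invariance of single $\nar{\mu}$ transitions --- the auxiliary fact you record separately is exactly the item the paper cites as ``Prop.~\ref{prop:closure-3d}-2'' from within that proposition's proof. Both arguments then transfer the velocity-agreement and boundary-containment conditions verbatim via shape preservation, just as you do.
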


\section{Networks or 3D processes}\label{sec:networks}

Now we can define a network of 3D processes as a collection of 3D processes moving in
the same 3D space.

\begin{definition}[Networks of 3D processes]
\label{def:networks}
The set $\nets$ of {\em networks of 3D processes} (3D networks, for short) is generated by the
grammar:
$$N ::= \NIL \gor P \gor N\,\|\, N $$
where $ P \in \proc$. Given a finite set of indexes $I$, we often write $ (\| \, P_i)_{i \in I}$
to denote the network that consists of all $P_i$ with $i \in I$. We assume that $I = \emptyset$
implies $(\|\, P_i)_{i \in I} = \NIL$. For $N  =(\| \, P_i)_{i \in I}$ we let $S_i = \shape(P_i)$,
for $i \in I$, and define $\colliding(N)$ as the set of all tuples $\langle P_i, P_j, X \rangle$  such that $\langle S_i, S_j, X \rangle \in \colliding(\{S_i\}_{i \in I})$ (see Def.~\ref{def:colls-information}).
A network $N = (\| \, P_i)_{i \in I}$ is said to be well-formed iff each $P_i$ is well-formed and,
for each pair of distinct processes $P_i$ and $P_j$, the shapes $S_i$ and $S_j$ {\em do not
interpenetrate}. Moreover, we extend the definition of $\move$ on networks in the natural way,
i.e.\ $\move \; t \;  (\| \, P_i)_{i \in I} = (\| \, \move \; t \ P_i)_{i \in I}$.
\end{definition}

In our running example we construct a network of processes containing a proper number of HEX, ATP
and GLC processes in order to replicate the conditions in a portion of cytoplasm.

\begin{definition}[Splitting bonds]\label{def:split}
The function $\Split: \proc  \times \wp(\channels) \rightarrow \nets$ is defined as follows:
If $\langle a, X \rangle \in \bonds(P) \cap C$ and $P \equiv_P Q \unionc{a, X} R$ then $\Split(P,
C) = \Split(Q, C) \,\|\, \Split(R, C)$; if, otherwise, $\bonds(P)  \cap C = \emptyset$, then
$\Split(P, C) = P$.
\end{definition}

It is worth noting that split shapes maintain the same velocity until the next occurrence of a
movement time step. As we mentioned above, this is not a problem because they will not trigger a
collision and, thus, a shorter first time of contact.

\begin{proposition}\label{prop:split-closure}
Let $P \in \proc$ well-formed and $C \subseteq \channels$. Then $\Split(P,C)$ is a well-formed
network of 3D processes.
\end{proposition}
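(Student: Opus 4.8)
The plan is to prove the statement by induction on the number of basic subprocesses $S[B]$ occurring in $P$, a quantity that is invariant under the structural congruence $\equiv_P$ and that the recursion in Definition~\ref{def:split} strictly decreases at every step; the induction is therefore well-founded and follows the recursive unfolding of $\Split$. In the base case $\bonds(P)\cap C=\emptyset$, so $\Split(P,C)=P$ is the network consisting of the single well-formed process $P$; condition~(2) of Definition~\ref{def:networks} is vacuous (there is no pair of distinct processes), so $\Split(P,C)$ is well-formed.

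For the inductive step assume $\langle a,X\rangle\in\bonds(P)\cap C$. By Proposition~\ref{prop:splitwf} there is a \emph{well-formed} process $Q\unionc{a,X} R\equiv_P P$, and since $\Split$ is well-defined up to $\equiv_P$ it suffices to evaluate $\Split(P,C)=\Split(Q,C)\parproc\Split(R,C)$ on this decomposition. Well-formedness of $Q\unionc{a,X} R$ yields, by Definition~\ref{def:processes}, that $Q$ and $R$ are themselves well-formed and that $\shape(Q)$ and $\shape(R)$ share only boundary points, hence do not interpenetrate. As $Q$ and $R$ each contain strictly fewer basic subprocesses than $P$, the induction hypothesis applies and both $\Split(Q,C)$ and $\Split(R,C)$ are well-formed networks.

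It remains to verify that the composition $\Split(Q,C)\parproc\Split(R,C)$ is well-formed. Condition~(1) of Definition~\ref{def:networks} is immediate, since every constituent process is well-formed by one of the two induction hypotheses. The real work, and the step I expect to be the main obstacle, is condition~(2): pairs lying wholly inside $\Split(Q,C)$ or wholly inside $\Split(R,C)$ are non-interpenetrating by the induction hypotheses, so the genuinely new case is a process $P'$ from $\Split(Q,C)$ against a process $P''$ from $\Split(R,C)$. I would dispatch it with an auxiliary containment lemma: \emph{every process $P'$ occurring in $\Split(Q,C)$ satisfies $\points{\shape(P')}\subseteq\points{\shape(Q)}$}, and symmetrically for $R$. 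This lemma is a routine induction using $\points{S_1\union{X} S_2}=\points{S_1}\cup\points{S_2}$ together with the invariance of $\points{\cdot}$ under $\equiv_S$. Granting it, if some $\mathbf{x}$ were interior to both $\points{\shape(P')}$ and $\points{\shape(P'')}$, the open ball witnessing interiority in each subset would also witness it in the corresponding superset, so $\mathbf{x}$ would be interior to both $\points{\shape(Q)}$ and $\points{\shape(R)}$, contradicting the non-interpenetration of $\shape(Q)$ and $\shape(R)$ established above. Hence no such $\mathbf{x}$ exists, condition~(2) holds across the cut, and $\Split(P,C)$ is a well-formed network, closing the induction.
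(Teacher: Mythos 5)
Your proof is correct and follows essentially the same route as the paper's: induction tracking the recursive unfolding of $\Split$, decomposition via Proposition~\ref{prop:splitwf}, the induction hypothesis on both halves, and the cross-cut case settled by noting that (interior) points of any component of $\Split(Q,C)$ are contained in those of $Q$, contradicting well-formedness of $Q \unionc{a,X} R$. The only cosmetic differences are your induction measure (number of basic subprocesses rather than $|\bonds(P)\cap C|$) and your derivation of the containment by a direct induction on $\points{\cdot}$, where the paper instead iterates Proposition~\ref{prop:splitwf} to flatten each half.
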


\begin{definition}[Semantics of weak- and strong-split operation] \label{def:creaction}
Let\\
\noindent $P \in \proc$ a 3D process. If $P\searrow$, we write that $P \nar{\rho} N \in \nets$ iff
there is a non empty set of channels $C =\{\langle a_1, X_1 \rangle, \cdots, \langle a_n,X_n \rangle
\} \subseteq \bonds(P)$ such that $P=P_0 \dnar{\rho(a_1, X_1)} P_1 \cdots \dnar{\rho(a_n, X_n)}
P_n$, $P_n \not\nar{\rho}$  and $N = \Split(P_n,C)$. Similarly, $P \nar{\omega} N$ iff there is a
channel $\langle a, X \rangle \in \bonds(P)$ such that $P \dnar{\omega(a, X)} Q$ and $N = \Split(Q,
\{\langle a, X \rangle\})$.
\end{definition}

Since weak-split operations are due to a synchronization between just a pair of 3D processes,
condition `$P$ is able to complete a reaction' is not needed, but `$P \dnar{\omega(a, X)} Q$'
suffices to our aim.

\begin{example}
Let us consider $P = H \unionc{{\sf atp}, X} (A  \unionc{{\sf glc}, Y}\,G)$ where:

\begin{description}
\item $H = S_{h}[\,\rho(\{\ch{{\sf atp}}{X_{ha}}, \ch{{\sf glc}}{Y_{hg}}\}). {\sf HEX}\,]$,

\item  $A=S_{a}[\, \rho(\{ \langle \overline{{\sf atp}}, X_{ah} \rangle \}). {\sf ADP}
+ \omega(\overline{{\sf atp}}, X_{ah}). {\sf ATP} \, ]$,

\item $G=S_{g}[\rho(\{ \langle \overline{{\sf glc}}, Y_{gh} \rangle\}). {\sf G6P} +
\omega(\overline{{\sf glc}}, Y_{gh}). {\sf GLC}]$,
\end{description}

\noindent $X'_{ha} \cap X'_{ah} = X$ (here  $X'_{ha}$ and $X'_{ah}$ are the sites $X_{ha}$
and $X_{ah}$ expressed w.r.t.\ a global coordinate system; this convention will be applied later
on)  and $Y'_{hg} \cap Y'_{gh} = Y$. According to the definitions given so far, $P$ is {\em able to complete a reaction} since:
$$\begin{array}{l c l}
P & \dnar{\rho({\sf atp}, X)} &
S_{h}[\rho(\{\langle {\sf glc}, Y_{hg} \rangle\}). {\sf HEX}] \unionc{{\sf atp}, X}
(S_{a}[{\sf ADP}]  \unionc{{\sf glc}, Y} \, G ) \\
& \dnar{\rho({\sf glc}, Y)} &
S_{h}[ {\sf HEX} ] \unionc{{\sf atp}, X} (S_{a}[ {\sf ADP}]  \unionc{{\sf
glc}, Y} S_g [ {\sf G6P} ]) = R
\end{array}$$
\noindent Moreover, $R  \not\nar{\rho}$ and
$$\begin{array}{l c l}
\Split(R, C)  & = &
S_a [{\sf HEX}] \,\|\, \Split(S_{a}[{\sf ADP}] \unionc{{\sf glc}, Y} S_{g}[{\sf G6P}],C) \\ & = &
S_{h}[{\sf HEX}] \,\|\, ( S_{a}[{\sf ADP}] \,\|\, S_{g}[{\sf G6P}] ) = N
\end{array}$$
\noindent  where $C = \{\langle {\sf atp}, X \rangle, \langle {\sf glc}, Y\rangle \} $, implies $P
\nar{\rho} N$. Moreover, let us note that, for each $t \in \timedomain$, $P \wnar{t} P$ but $P
\not\nar{t}$ since since $P$ is able to complete a reaction and $P \not\nar{\rho}$ does not hold.
\end{example}

Below we define the temporal and functional behaviour of 3D networks. We assume that
such networks perform basic actions that belong to set $\{\omega, \rho, \kappa\}$, where $\omega$ and $\rho$ denote, resp., weak- and strong- split operations as a unique action (at the network level we only see whether shape bonds can be split or not) and $\kappa$ represents system evolutions due to collision response (see Section~\ref{sec:response}). We also let elements of the set $\{\omega, \rho\} \cup \timedomain$ to be ranged over by $\nu, \nu', \cdots$.

\begin{definition}[Temporal and Functional Behaviour of $\nets$-terms]\label{def:nets-behaviour}
\ \\ Rules in Table~\ref{table:nets} defines the transition relations $\nar{t} \subseteq \nets \times \nets$  for $t \in \timedomain$ and $\nar{\nu} \subseteq \nets \times \nets$ for $\nu \in
\{\omega,\rho\}$. As usual, symmetric rules have been omitted.

A {\em timed trace} from  $N$ is a finite sequence of steps of the form $N=N_0 \nar{\nu_1} N_1
\cdots \nar{\nu_{n}} N_{n} = M$. We also write that $N \dnar{t} M$ if there is a timed trace
$N=N_0 \nar{\nu_1} N_1  \cdots \nar{\nu_{n}} N_{n} = M$ such that $t = \sum\limits_{i=0}^{n} \{\nu_i \,|\, \nu_i \in \timedomain \} $.
\end{definition}

\begin{table}[th]
\[
\begin{array}{|c|}
\hline
\name{Empty$_{t}$}\sos{}{
\NIL \nar{t} \NIL} \quad
\name{Par$_t$} \sos{N \nar{t} N' \quad M \nar{t} M'}
{N \,\|\, M \nar{t} N'\,\|\, M'}  \\
\name{Par$_{a}$}\sos{N \nar{\nu} N'}{N \,\|\, M \nar{\nu} N'\,\|\, M}\\
\hline
\end{array}\]
\caption{Temporal and functional behaviour of 3D networks}
\label{table:nets}
\end{table}

\begin{proposition}\label{prop:net-closure}
Let $t \in \timedomain$, $P\in \proc$, $N\in \nets$, with $P$ and $N$ well-formed.

\par\smallskip\noindent 1. $P \nar{\omega} N$ and $P \nar{\rho} N$ implies $N \in \nets$
well-formed.

\par\smallskip\noindent 2. $N \nar{t} M$ implies $M$ well-formed;

\par\medskip\noindent 3. $N \dnar{t} M$ implies $M$ well-formed.
\end{proposition}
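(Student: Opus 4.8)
The plan is to prove the three parts in order, since each builds on the previous one and on the closure results already established for shapes and 3D processes. Part~1 concerns the two split operations at the network level, Part~2 concerns a single timed step of a network, and Part~3 concerns an entire timed trace.

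For Part~1, consider first the case $P \nar{\rho} N$. By Definition~\ref{def:creaction}, $P \searrow$ and there is a non-empty set $C = \{\langle a_1, X_1\rangle, \dots, \langle a_n, X_n\rangle\} \subseteq \bonds(P)$ together with a chain $P = P_0 \dnar{\rho(a_1, X_1)} P_1 \cdots \dnar{\rho(a_n, X_n)} P_n$ with $P_n \not\nar{\rho}$ and $N = \Split(P_n, C)$. First I would apply Proposition~\ref{prop:closure-splittings}(2) repeatedly along the chain: since $P = P_0$ is well-formed and each step is a $\dnar{\rho(\cdot)}$ transition, an easy induction on $n$ yields that every $P_k$, and in particular $P_n$, is well-formed. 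Then I would invoke Proposition~\ref{prop:split-closure} with the well-formed process $P_n$ and the channel set $C$ to conclude that $N = \Split(P_n, C)$ is a well-formed network. The case $P \nar{\omega} N$ is entirely analogous but simpler: there is a single bond $\langle a, X\rangle \in \bonds(P)$ with $P \dnar{\omega(a, X)} Q$ and $N = \Split(Q, \{\langle a, X\rangle\})$; I would apply Proposition~\ref{prop:closure-splittings}(2) once to obtain that $Q$ is well-formed, then Proposition~\ref{prop:split-closure} to conclude.

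For Part~2, I would proceed by induction on the structure of $N$, following the rules of Table~\ref{table:nets} that derive $N \nar{t} M$. In the base case $N = \NIL$, rule~\name{Empty$_t$} gives $M = \NIL$, which is well-formed by convention. In the base case where $N = P$ is a single 3D process, the transition $P \nar{t} M$ is, by Definition~\ref{def:strong-timed-behaviour}, a $\wnar{t}$ transition, so Proposition~\ref{prop:closure-3d} gives that $M$ is a well-formed 3D process. The inductive case is $N = N_1 \parproc N_2$ derived by rule~\name{Par$_t$}, so $M = N_1' \parproc N_2'$ with $N_1 \nar{t} N_1'$ and $N_2 \nar{t} N_2'$; the induction hypothesis yields that each $N_i'$ is well-formed, and it remains to check the non-interpenetration condition between shapes of distinct processes in $M$. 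Here I would use the fact that well-formedness of $N$ guarantees the original shapes do not interpenetrate at time $0$, together with the definition of $\ftoc$ and the way $\nar{t}$ is restricted so that $t$ does not exceed the first time of contact; thus all shapes advance by the same $t$ without interpenetrating. This non-interpenetration step is where I expect the main obstacle to lie, because it is the one part not reducible to a direct application of an earlier closure proposition; it requires relating the timed step of the whole network to the first-time-of-contact semantics so that no two advanced shapes overlap.

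For Part~3, I would argue by induction on the length of the timed trace witnessing $N \dnar{t} M$. By Definition~\ref{def:nets-behaviour}, such a trace is a finite sequence $N = N_0 \nar{\nu_1} N_1 \cdots \nar{\nu_n} N_n = M$ where each $\nu_i$ is either a time value in $\timedomain$ or a split label in $\{\omega, \rho\}$. For each step I would distinguish according to the label: if $\nu_i \in \timedomain$, the step preserves well-formedness by Part~2; if $\nu_i \in \{\omega, \rho\}$, it is derived by rule~\name{Par$_a$} from a process-level split $P \nar{\nu_i} N'$, which preserves well-formedness by Part~1 (the surrounding parallel context being well-formed by the induction hypothesis and unchanged by \name{Par$_a$}). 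Starting from the well-formed $N_0 = N$ and applying the appropriate part at each of the $n$ steps, I conclude that $M = N_n$ is well-formed. The only subtlety worth flagging is the \name{Par$_a$} case, where I must check that embedding a well-formed sub-network into a well-formed parallel context again yields a well-formed network, i.e.\ that no new interpenetrations are introduced; but since a split operation does not move shapes (by Proposition~\ref{prop:closure-splittings}(1), the shapes are preserved) and the untouched component $M$ is unchanged, the non-interpenetration invariant is maintained.
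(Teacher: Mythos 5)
Parts~1 and~3 of your plan coincide with the paper's proof. For Item~1 the paper does exactly what you propose: iterative applications of Proposition~\ref{prop:closure-splittings}(2) along the chain of $\dnar{\rho(a_i,X_i)}$ (resp.\ a single application for $\dnar{\omega(a,X)}$), followed by Proposition~\ref{prop:split-closure} on the final process. For Item~3 the paper simply says it follows from Items~1 and~2 by Definition~\ref{def:nets-behaviour}; your stepwise induction on the trace is the same argument spelled out, and your extra check in the \name{Par$_a$} case --- that a split does not move any shape, so embedding the resulting sub-network back into the parallel context cannot create interpenetrations --- is a point the paper leaves implicit and is sound.

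The genuine gap is in Part~2, at exactly the step you yourself flagged as the main obstacle. You justify non-interpenetration in the \name{Par$_t$} case by ``the way $\nar{t}$ is restricted so that $t$ does not exceed the first time of contact''. No such restriction exists in the semantics: Definition~\ref{def:strong-timed-behaviour} constrains the process-level $\nar{t}$ only by the urgency of strong splits, and the network rules of Table~\ref{table:nets} add nothing further; the $\ftoc$ discipline enters the calculus only at the level of system evolutions in Definition~\ref{def:mts} --- whose well-formedness is proved \emph{using} this proposition, so appealing to it here would also be circular. At the point where you need the bound, there is simply no rule to cite, and your argument as written does not go through. The paper closes this case by a different route: it first establishes Lemma~\ref{lemma:net-time} (the network $N$ contains $P$ with $P \nar{t} Q$ iff $M$ contains $Q$), combines it with Proposition~\ref{prop:closure-3d}, which gives $\shape(Q_i) = \shape(P_i) + t$ for the components, and then argues by contradiction that an interpenetration between two processes of $M$ lifts back to an interpenetration between the corresponding processes of $N$, contradicting well-formedness of $N$. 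To repair your proof within the paper's framework you need an argument of this shape --- relating each component of $M$ to a component of $N$ whose shape is the same up to the uniform time shift $+t$ --- rather than importing the $\ftoc$ bound; your intuition correctly identifies where the calculus \emph{operationally} prevents interpenetration, but that mechanism is not part of the transition relation this proposition analyses.
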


\subsection{Collision response}\label{sec:response}

In this section we describe the semantics of {\em collisions response}. As already said, the notion of compatibility between channels (and, hence, processes) has been introduced to distinguish between elastic and inelastic collision. In particular, collisions among compatible processes are always inelastic. So, if $P\nar{\ch{a}{X_p}} P'$ and $Q \nar{\ch{\overline{a}}{X_q}} Q'$, with $\langle a, X_p\rangle$ and $\langle \overline{a}, X_q\rangle$ compatible, and $P$ and $Q$ collide in the non-empty site $X = X_p \cap X_q$ we get a compound process $(P' \unionc{a, X} Q') \lsbrace \vect{v} \rsbrace $ where the velocity $\vect{v}$ is  provided by Equation~(3) in Def.~\ref{def:collisions}. Vice versa, a collision between two incompatible processes $P$ and $Q$ is treated as an elastic one. After such a collision, $P$ and $Q$ (actually the processes we obtain by updating their velocities according to Equations (1) and (2) in Def.~\ref{def:collisions}) will proceed independently to each other.

To resolve collisions, we introduce two different kinds of {\em reduction relations} over 3D
networks, namely $\enar{\langle P, Q, X \rangle}$ and $\inar{\langle P, Q, X \rangle}$, where $P, Q$ are 3D processes and $X$ is a surface of contact (see Table~\ref{table:reduce}). Intuitively, if $N \enar{\langle P, Q, X \rangle} M$ ($N \inar{\langle P, Q, X \rangle} M$), then $M$ is the 3D network we obtain once an elastic (inelastic, resp.) collision between $P$ and $Q$ in the surface of contact $X$ has been resolved. These reduction relations also use the structural congruence over 3D networks.

\begin{definition}[Structural congruence over 3D networks]\label{def:congruencenets}
\ \\ The structural congruence over terms in $\nets $, that we denote with $\equiv$, is the
smallest relation that satisfies the following axioms:

\begin{itemize}
\item [-] $\NIL \,\|\, N \equiv N$,  $N \,\|\, M \equiv M \,\|\, N$ and $N
\,\|\, (M \,\|\, R)
\equiv (N \,\|\, M)  \,\|\, R$;

\item [-] $P \,\|\, N \equiv Q \,\|\,N$ provided that $P \equiv_P Q$.
\end{itemize}
\end{definition}

Rule \name{elas} in Table~\ref{table:reduce} simply changes velocities of two colliding but
incompatible processes guided by Equations (1) and (2) in Def.~\ref{def:collisions}, while
rule~\name{inel} joins two compatible processes $P$ and $Q$ to obtain a compound process
whose velocity is given by Equation (3) in Def.~\ref{def:collisions}. Note that we force $P$ and $Q$
to synchronize on the execution of two compatible actions $\ch{\alpha}{X_p}$ and
$\ch{\overline{\alpha}}{X_q}$ before joining them. In rules~\name{elas$_\equiv$}
and~\name{inelas$_\equiv$} we also consider structural congruence over nets of processes.
In Def.~\ref{def:reductionrules} we collect together all the reduction-steps needed to
solve collisions listed in a given set of collisions $\colliding(N)$; clearly $N$ is a generic  3D
network.

\begin{definition}[Resolving  collisions]
\label{def:reductionrules}
Let $N \in \nets$ and $\langle P, Q, X \rangle$ a tuple in $\colliding(N)$. $N
\nar{\langle P,Q, X \rangle} M$ if either $P \comp Q$ and $N \inar{\langle P,Q, X \rangle} M$ or $P \not\comp Q$ and $N \enar{\langle P,Q, X \rangle} M$.

Moreover, we write that $N \nar{\kappa} M$ if either $\colliding(N) = \emptyset$ and $N=M$  or
$\colliding(N) \neq \emptyset$ and there is a finite sequence of reduction
steps $N = N_0 \nar{\langle P_1,Q_1, X_1 \rangle} N_1 \cdots \nar{\langle P_k,Q_k, X_k \rangle} N_k = M$ such that:
\begin{enumerate}
\item $\langle P_{i},Q_i, X_i \rangle \in \colliding(N_{i-1})$ for each $i \in [1, k]$;
\item $\colliding(N_k) = \emptyset$.
\end{enumerate}
\end{definition}

\begin{table}[tbh]
$$
\begin{array}{|c|}
\hline
\name{elas} \sos{P \collision{X}{e} Q = (\vect{v}_p,\vect{v}_q)}{%
(P \,\|\, Q) \,\|\, N \enar{\langle P,Q,X \rangle} (P \lsbrace \vect{v}_p\rsbrace \,\|\, Q\lsbrace
\vect{v}_q \rsbrace) \,\|\, N}\\
\name{elas$_\equiv$}\sos{%
N \equiv N' \quad N' \enar{\langle P, Q, X \rangle} M} { N \enar{\langle P, Q, X \rangle} M}\\ \\
\name{inel} \sos{%
P \nar{\langle \alpha, X_p\rangle} P' \quad Q \nar{\langle \overline{\alpha}, X_q \rangle} Q' \quad
\alpha \in \{a,\overline{a} \}  \quad P \collision{X_p \,\cap\, X_q}{i} Q = \vect{v}}{%
(P \,\|\, Q) \,\|\, N \inar{\langle P, Q, X_p \,\cap\, X_q \rangle}
((P' \unionc{a,X_p \cap X_q} Q') \lsbrace \vect{v} \rsbrace) \,\|\, N
}\\
\name{inel$_\equiv$} \; \sos{N \equiv N' \qquad
N' \inar{\langle P, Q, X \rangle} M}
{N \inar{\langle P, Q, X \rangle} M }\\
\hline
\end{array}$$
\caption{Reaction rules for elastic and inelastic collisions}
\label{table:reduce}
\end{table}

Let also note that, at any given time $t$, $\colliding(N)$ can be obtained from the set of all the pairs of processes in $N$ that are touching at that time. This set and hence $\colliding(N)$ is surely finite and changes only when we resolve some inelastic collision (this is because, after an inelastic collision one or more binding sites can possibly become internal points of a compound process, and hence are not available anymore). Moreover a collision between pairs of processes with the same shape can not be resolved twice. This is either because two processes $P$ and $Q$  have been bond in a compound process as a consequence of an inelastic collision, or because $P$ and $Q$ collide elastically and their velocities have been changed according to Equations (1) and (2) in Def.~\ref{def:collisions} in order to obtain two processes that do not collide anymore (see Lemma~\ref{lemma:reduction-steps} in Appendix C). Thus, we can always decide if there is a finite sequence of reduction steps that allows us to resolve all collisions listed in $\colliding(N)$  and hence obtain a network $M$ with $\colliding(M) = \emptyset$.

\begin{proposition}\label{prop:reduction-closure}
Let $N \in \nets$, $P,Q \in \proc$ and $X$ a not-emptyset subset of $\pos$. Then $N$ well-formed
and $N \nar{\langle P, Q, X \rangle} M$ implies $M$ well-formed.
\end{proposition}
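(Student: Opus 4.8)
The plan is to proceed by a case analysis on the last rule used to derive $N \nar{\langle P,Q,X\rangle} M$. By Def.~\ref{def:reductionrules} this relation unfolds, according to whether $P\comp Q$ or $P\not\comp Q$, into the inelastic relation $\inar{\langle P,Q,X\rangle}$ or the elastic one $\enar{\langle P,Q,X\rangle}$, each generated by two rules in Table~\ref{table:reduce}. Before entering the cases I would record two auxiliary facts. First, a short lemma stating that structural congruence preserves well-formedness of networks: if $N \equiv N'$ and $N$ is well-formed, then so is $N'$ (this follows by induction on the axioms of Def.~\ref{def:congruencenets}, using that $\equiv_P$ and $\equiv_S$ only rearrange components while preserving point sets, boundaries and velocities). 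This immediately dispatches the two congruence rules \name{elas$_\equiv$} and \name{inel$_\equiv$}: from the premise $N \equiv N'$ we get $N'$ well-formed, apply the base case to $N' \enar{\langle P,Q,X\rangle} M$ (resp. $N' \inar{\langle P,Q,X\rangle} M$), and conclude. Second, I would note that updating a velocity leaves point sets and boundaries unchanged and preserves process well-formedness: since $P$ is well-formed, $\shape(P)$ is well-formed (Prop.~\ref{prop:well-formed}), hence $\shape(P\lsbrace\vect{w}\rsbrace)=\shape(P)\lsbrace\vect{w}\rsbrace$ is well-formed by Prop.~\ref{prop:shapes-wellformedness}, and the channel-on-boundary conditions of Def.~\ref{def:processes} are insensitive to velocity.

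For the elastic rule \name{elas} we have $N = (P \parproc Q)\parproc N'$ well-formed and $M = (P\lsbrace \vect{v}_p\rsbrace \parproc Q\lsbrace \vect{v}_q\rsbrace)\parproc N'$. Each component process is well-formed by the velocity-update remark above, so it only remains to check pairwise non-interpenetration. Since $P\lsbrace\vect{v}_p\rsbrace$ and $Q\lsbrace\vect{v}_q\rsbrace$ occupy exactly the point sets $\points{\shape(P)}$ and $\points{\shape(Q)}$, and $P, Q$ are distinct processes of the well-formed network $N$ (hence do not interpenetrate), they still do not interpenetrate; the same argument applied to every pair involving a process of $N'$ shows that no new interpenetration is created. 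Thus $M$ is well-formed.

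For the inelastic rule \name{inel} we have $N=(P\parproc Q)\parproc N'$ and $M = ((P'\unionc{a,X}Q')\lsbrace\vect{v}\rsbrace)\parproc N'$, where $P\nar{\langle\alpha,X_p\rangle}P'$, $Q\nar{\langle\overline{\alpha},X_q\rangle}Q'$ and $X = X_p\cap X_q$. By Prop.~\ref{prop:closure-3d} the residuals $P'$ and $Q'$ are well-formed, and since channel actions do not alter shapes (rules \name{Basic$_{c}$} and \name{Comp$_{c}$}) we have $\shape(P')=\shape(P)$ and $\shape(Q')=\shape(Q)$. From $\langle P,Q,X\rangle\in\colliding(N)$ and Def.~\ref{def:colls-information} the surface $X$ satisfies $X = \points{\shape(P)}\cap\points{\shape(Q)} = \boundary{\shape(P)}\cap\boundary{\shape(Q)}$ and is non-empty, which is precisely the surface-of-contact condition of Def.~\ref{def:shapeswf} for the composite $\shape(P')\union{X}\shape(Q')$. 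After the velocity update, $(P'\unionc{a,X}Q')\lsbrace\vect{v}\rsbrace = (P'\lsbrace\vect{v}\rsbrace)\unionc{a,X}(Q'\lsbrace\vect{v}\rsbrace)$ has both factors carrying the common velocity $\vect{v}$; together with the boundary condition above (unchanged by the update) this yields a well-formed compound process by Def.~\ref{def:processes}. Finally, its point set is $\points{\shape(P)}\cup\points{\shape(Q)}$, so non-interpenetration with every process of $N'$ reduces to the non-interpenetration already guaranteed by well-formedness of $N$. Hence $M$ is well-formed.

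The main obstacle is the inelastic case. The delicate point is that the new bond $\langle a, X\rangle$ may be installed only if $X$ is genuinely a common-boundary surface of contact, i.e.\ the equality $X=\points{\shape(P)}\cap\points{\shape(Q)}=\boundary{\shape(P)}\cap\boundary{\shape(Q)}$ of Def.~\ref{def:shapeswf}, and this exact information must be drawn from membership in $\colliding(N)$ via Def.~\ref{def:colls-information}, not merely from the weaker fact that the channel sites $X_p, X_q$ lie on the respective boundaries. A second subtlety is the order of reasoning: before the update the residuals $P'$ and $Q'$ generally carry different velocities, so the compound $P'\unionc{a,X}Q'$ is not yet well-formed and the equal-velocity requirement of Def.~\ref{def:shapeswf}/Def.~\ref{def:processes} is met only after applying $\lsbrace\vect{v}\rsbrace$. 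For this reason I would, at the very start of the argument, fix the convention that the reduction is applied to the network already advanced to the first time of contact, so that the positions referred to by $\colliding(N)$ coincide with those of the shapes being glued.
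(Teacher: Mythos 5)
Your overall skeleton (case split on \name{elas}/\name{inel}, dispatching \name{elas$_\equiv$}/\name{inel$_\equiv$} via preservation of well-formedness under $\equiv$, velocity updates leaving point sets and boundaries unchanged, and non-interpenetration with $N'$ via $\points{\shape(P')\union{X}\shape(Q')} = \points{\shape(P)}\cup\points{\shape(Q)}$) matches the paper's proof, and your remark that $P'\unionc{a,X}Q'$ acquires the equal-velocity property only after applying $\lsbrace\vect{v}\rsbrace$ is in fact more careful than the paper's own wording. But there is a genuine flaw in your central step for the inelastic case. You assert that membership of $\langle P,Q,X\rangle$ in $\colliding(N)$ yields $X=\points{\shape(P)}\cap\points{\shape(Q)}=\boundary{\shape(P)}\cap\boundary{\shape(Q)}$, and you explicitly insist that this equality, "not merely the weaker fact that the channel sites lie on the respective boundaries," is what licenses installing the bond. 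This is wrong on two counts. First, Definition~\ref{def:colls-information} constrains the intersection of the shapes' point sets and boundaries, but it never identifies the tuple's third component with that intersection; and in rule \name{inel} the bond surface is $X=X_p\cap X_q$, the intersection of the two \emph{channel} sites, which in general is a proper subset of the full geometric contact surface (a channel's active site need not cover everywhere the shapes touch). So the claimed equality simply does not follow, and a proof resting on it fails. Second, the equality is not needed: well-formedness of a compound process in Definition~\ref{def:processes} requires only that $X$ be a \emph{non-empty subset} of $\boundary{P'}\cap\boundary{Q'}$, and that is exactly delivered by the fact you set aside --- $P\nar{\langle\alpha,X_p\rangle}P'$ and $Q\nar{\langle\overline{\alpha},X_q\rangle}Q'$ force $X_p\subseteq\boundary{P}$ and $X_q\subseteq\boundary{Q}$ (by process well-formedness and rules \name{Basic$_c$}/\name{Comp$_c$}), whence $X=X_p\cap X_q\subseteq\boundary{P}\cap\boundary{Q}=\boundary{P'}\cap\boundary{Q'}$ since channel actions preserve shapes (Proposition~\ref{prop:closure-3d}). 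This is precisely the route the paper takes.

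The repair is local: drop the equality claim, use the boundary-subset argument above for the bond, and, if you want the non-interpenetration of $P$ and $Q$ themselves at the moment of gluing (the point-set equality mentioned in the note to Definition~\ref{def:processes}), draw it either from well-formedness of $N$ (distinct processes of a well-formed network do not interpenetrate) or from condition~1 of Definition~\ref{def:colls-information} applied to the \emph{shapes} --- just not transferred onto $X$. With that correction your proof coincides with the paper's.
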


By iterative applications of Proposition~\ref{prop:reduction-closure} (see Appendix C for the proof)
it is also:

\begin{lemma}\rm\label{lemma:reduction-closure}
Let $N$ a well-formed 3D network. Then $N \nar{\kappa} M$ implies $M$ well-formed.
\end{lemma}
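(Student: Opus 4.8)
The plan is to prove Lemma~\ref{lemma:reduction-closure} by induction on the length $k$ of the finite sequence of reduction steps witnessing $N \nar{\kappa} M$. The statement $N \nar{\kappa} M$ unfolds, by Def.~\ref{def:reductionrules}, into one of two cases: either $\colliding(N) = \emptyset$ and $N = M$, in which case $M = N$ is well-formed by hypothesis and there is nothing to prove; or there is a sequence
$$N = N_0 \nar{\langle P_1, Q_1, X_1 \rangle} N_1 \cdots \nar{\langle P_k, Q_k, X_k \rangle} N_k = M$$
with $\langle P_i, Q_i, X_i \rangle \in \colliding(N_{i-1})$ for each $i \in [1,k]$ and $\colliding(N_k) = \emptyset$. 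I would induct on this $k$.

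For the base case $k = 0$ we again have $M = N$, which is well-formed. For the inductive step, I would first apply Proposition~\ref{prop:reduction-closure} to the single step $N_0 \nar{\langle P_1, Q_1, X_1 \rangle} N_1$: since $N_0 = N$ is well-formed and $\langle P_1, Q_1, X_1 \rangle \in \colliding(N_0)$, Proposition~\ref{prop:reduction-closure} yields that $N_1$ is well-formed. The remaining sequence $N_1 \nar{\langle P_2, Q_2, X_2 \rangle} \cdots \nar{\langle P_k, Q_k, X_k \rangle} N_k = M$ is then a witness of length $k-1$ that $N_1 \nar{\kappa} M$, with $\langle P_i, Q_i, X_i \rangle \in \colliding(N_{i-1})$ for $i \in [2,k]$ and $\colliding(N_k) = \emptyset$. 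Applying the induction hypothesis to the well-formed network $N_1$ gives that $M$ is well-formed, completing the induction.

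The bulk of the work therefore rests in Proposition~\ref{prop:reduction-closure}, which I may assume as already established; the lemma itself is essentially a mechanical iteration of that single-step closure result, exactly as the preceding text (``By iterative applications of Proposition~\ref{prop:reduction-closure}'') signals. The one point requiring a little care is checking that the tail of the reduction sequence is genuinely a legitimate $\kappa$-derivation from $N_1$: one must verify that the side conditions of Def.~\ref{def:reductionrules} continue to hold for the truncated sequence, namely that $\langle P_i, Q_i, X_i \rangle \in \colliding(N_{i-1})$ still holds for the shifted indices and that the terminal condition $\colliding(N_k) = \emptyset$ is preserved. Both are immediate since they are local conditions on consecutive terms of the sequence that are unaffected by dropping the first step. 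Thus the only genuine obstacle has already been discharged in Proposition~\ref{prop:reduction-closure}, and no further collision-resolution analysis (elastic versus inelastic, compatibility of channels, conservation-of-momentum velocity updates) is needed at this level, since all of that is encapsulated in the single-step result.
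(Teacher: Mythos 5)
Your proof is correct and follows exactly the route the paper intends: the paper itself dispatches this lemma with the single remark ``by iterative applications of Proposition~\ref{prop:reduction-closure}'', and your induction on the length $k$ of the reduction sequence, applying the single-step closure result at each stage, is precisely the formalization of that iteration. The additional check that the truncated sequence remains a legitimate $\kappa$-derivation from $N_1$ is a sensible bit of diligence, and it goes through just as you say.
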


We are now ready to define how a network of 3D processes evolves by performing an infinite number of {\em movement time steps}.

\begin{definition}[System evolution]
\label{def:mts}
Let $N, M \in \nets$  and $t, t'\in \timedomain$. We say that $(N, t) \dnar{t'} (M, t + t')$ iff one of the following conditions holds:

\begin{enumerate}
\item $t' =\ftoc (N) \leq \mts$ and $N \dnar{t'} N' \nar{\kappa} N'' $ and $M = \move \;  (t+ t') \; N''$;

\item $t' = \mts < \ftoc (N) $ and $N \dnar{t'} N'$ and $M = \move \; (t+ t') \; N'$.
\end{enumerate}

A {\em system evolution} is any infinite sequence of time steps of the form:
$$(N_0,0) \dnar{t_1} (N_1, t_1) \dnar{t_2} (N_2, t_1 + t_2)
\cdots (N_{i-1},  \sum\limits_{j=1}^{i-1} t_j) \dnar{t_i} (N_{i}, \sum\limits_{j=1}^{i} t_j )
\dnar{t_{i+1}} \cdots$$
\end{definition}

Note that, for each $i \geq 1$, $t_i = \min \{\ftoc(N_{i-1}), \mts\}$ as discussed in Section~\ref{sec:overview}. Moreover, in order to make sure that processes will never interpenetrate during a system evolution, if $\ftoc(N_{i-1}) \leq \mts$, we first resolve all the collisions that happen after time $t_i = \ftoc(N_{i-1})$ (by means of transition $\nar{\kappa}$) and then apply the changes suggested by the function $\move$ as described in Section~\ref{sec:trajectories}.

\begin{example}\label{ex:evolution}
This example shows a possible evolution of the 3D network $ N_0 = (HEX \,\|\, ATP) \,\|\, GLC$ where $HEX$, $ATP$ and $GLC$ are the 3D processes of Example~\ref{ex:running}. Below we use the following notation:

\begin{itemize}
\item [ - ]  $H(t)= (S_h + t) [{\sf HEX}]$, $A(t) = (S_a + t) [{\sf ATP}]$ and $G(t) = (S_g + t)
[{\sf GLC}]$ for each $t \in \timedomain$. Note that $HEX=H(0)$, $ATP=A(0)$ and $GLC=G(0)$;

\item [ - ]  ${\sf C}=  \rho(\{\langle {\sf atp}, X_{ha} \rangle, \langle  {\sf glc}, X_{hg}
\rangle\}). {\sf HEX}$  and, for any $t \leq t_h$,  ${\sf HA}(t) = \omega({\sf atp}, X_{ha}). {\sf
HEX} + \epsilon(t_h -t). \langle  {\sf glc}, X_{hg}  \rangle.{\sf C}$;

\item [ - ] ${\sf AH}(t) = \omega(\overline{{\sf atp}}, X_{ah}). {\sf ATP} + \epsilon(t_a -t).
\rho(\{ \overline{{\sf atp}}, X_{ah} \}). {\sf ADP}$ for any $t \in \timedomain$ with $t \leq t_a$;

\item [ - ]  ${\sf GH}(t) = \omega(\overline{{\sf glc}}, X_{gh}). {\sf GLC} + \epsilon(t_g -t).
\rho(\{ \overline{{\sf glc}}, X_{gh} \}). {\sf G6P}$ for any $t \leq t_g$.
\end{itemize}

\noindent Let $t_1 = \ftoc(N_0)$ and assume $t_1 \leq \mts$. By the operational rules, it is $N_0
\dnar{t_1} H(t_1) \,\|\, A(t_1) \,\|\, G(t_1) = N'_0$. We also assume that $\colliding(N'_0) =
\{\langle H(t_1), A(t_1), X \rangle \}$ where $X = X'_{ha} \cap X'_{ah} \neq \emptyset$, $X'_{ha}
= \reference(X_{ha}, \referencepoint{S_h + t_1})$ and $X'_{ah} = \reference(X_{ah},
\referencepoint{S_a + t_1})$. Then: $N'_0 \nar{\kappa} P(t_1) \,\|\, G(t_1) = N''_0$ where $P(t_1)
= \big((S_h+t_1) [{\sf HA}(0)] \langle {\sf atp}, X \rangle (S_a+t_1)[{\sf AH}(0)] \big)\lsbrace
{\bf v}_{ha} \rsbrace$ and ${\bf v}_{ha}  =  H(t_1)  \collision{X}{i} A(t_1)$. Finally: $$(N_0, 0)
\dnar{t_1} (N_1, t_1)$$ where $N_1 = \move \, t_1 \, N''_0 =  \move \, t_1 \, P(t_1) \,\|\, \move
\, t_1 \, G(t_1) = \\ P(t_1) \lsbrace {\bf v}_{1} \rsbrace \,\|\, G(t_1) \lsbrace {\bf v}_{2}
\rsbrace$. Note that:

$$
\begin{array}{lcl}
P(t_1) \lsbrace {\bf v}_{1} \rsbrace  & =  & \big((S_h+t_1) [{\sf HA}(0)] \langle {\sf atp}, X
\rangle (S_a+t_1)[{\sf AH}(0)] \big)\lsbrace {\bf v}_{1} \rsbrace  \\
& = & \big( S_h^1 [{\sf HA}(0)] \langle {\sf atp}, X \rangle S_a^1 [{\sf AH}(0)] \big)  \\
\end{array}
$$

\noindent where $S_h^1 = ((S_h+t_1)\lsbrace {\bf v}_{1} \rsbrace)$ and $S_a^1 = ((S_a+t_1)\lsbrace
{\bf v}_{1} \rsbrace)$. Moreover, $G(t_1) \lsbrace {\bf v}_{2} \rsbrace  =  ((S_g + t_1)\lsbrace
{\bf v}_{2} \rsbrace [{\sf GLC}]= S_g^1 [{\sf GLC}]$.

Let $t_2 = \ftoc(N_1)$ and assume $t_2 = t_h \leq \min \{t_a, \mts\}$\footnote{If were $t_2 < t_h$. the 3D processes ${\sf HA}(t_2)$ and ${\sf GLC}$ would be no more compatible, and a collision between them would be treated as elastic. On the other hand, if were $t_2 = t_a$ the idling time for ${\sf AH}(t_a)$ would be over. As a consequence, time would pass further only after the execution of a weak-split operation that splits the bond between the Hexokinases and the Atp molecules}. Below we write $G'(t_2)$ and $P'(t_2)$ to denote, respectively, the 3D processes $(S_g^1 + t_2)[{\sf GLC}]$ and $\big( (S_h^1 + t_2) [{\sf HA}(t_h)] \langle {\sf atp}, X + t_2 \cdot {\bf v}_1 \rangle (S_a^1 + t_2) [{\sf AH}(t_2)] \big)$. Again by the operational rules, $N_1 \dnar{t_2}  P'(t_2) \,\|\, G'(t_2) = N'_1$.

Let $\colliding(N'_1) = \{\langle P'(t_2), G'(t_2), Y \rangle \}$ where $Y =
X'_{hg} \cap X'_{gh} \neq \emptyset$, $X'_{hg} = \reference(X_{hg},
\referencepoint{S^1_h + t_2})$ and $X'_{gh} =
\reference(X_{gh},\referencepoint{S^1_g + t_2}) \subseteq \\
\boundary{P'(t_2)}$. If ${\bf v}_{gh} = P'(t_2) \collision{Y}{i} G'(t_2)$, then
\\ $N'_1 \nar{\kappa} \big( P(t_2) \, \langle {\sf glc},Y \rangle \, G(t_2)\big)
\lsbrace {\bf v}_{gh} \rsbrace = N''_1$ where $G(t_2) = (S_g^1 + t_2)[{\sf
GH}(0)]$ and $P(t_2) = \big( (S_h^1 + t_2) [{\sf C}] \langle {\sf atp}, X + t_2
\cdot {\bf v}_1 \rangle (S_a^1 + t_2) [{\sf AH}(t_2)] \big)$.  Finally: $$(N_1,
t_1) \dnar{t_2} (N_2, t_1 + t_2)$$ where $N_2 = \move \; (t_1 + t_2) \;  N''_1
= \big( P(t_2) \, \langle {\sf glc}, Y \rangle \, G(t_2)\big) \lsbrace {\bf
v}_{3} \rsbrace$. Observe that:

$\big( P(t_2) \, \langle {\sf glc}, Y \rangle \, G(t_2)\big) \lsbrace {\bf v}_{3} \rsbrace =
\big( P(t_2)  \lsbrace {\bf v}_{3} \rsbrace \big) \, \langle {\sf glc}, Y \rangle \, \big(G(t_2)
\lsbrace {\bf v}_{3} \rsbrace \big) =$

$\big( (S_h^2 [{\sf C}] \, \langle {\sf atp}, X + t_2 \cdot {\bf v}_1 \rangle \, S_a^2 [{\sf
AH}(t_2)] \big) \, \langle {\sf glc}, Y \rangle \, S_g^2[{\sf GH} (0)]$

\noindent where $S_h^2 = (S_h^1 + t_2) \lsbrace {\bf v}_{3} \rsbrace$, $S_a^2 = (S_a^1 + t_2)
\lsbrace {\bf v}_{3} \rsbrace$ and $S_g^2 = (S_g^1 + t_2) \lsbrace {\bf v}_{3} \rsbrace$.

At this stage the network contains just one process and, as a consequence, no collisions are possible. Thus, $\ftoc(N_2) = \infty$. Assume $t_g = t_a - t_2 \leq \mts$\footnote{If were $t_g \neq t_a - t_2$ the reaction could never proceed since the involved molecules would never be able to release -- all together -- the bonds. Thus the system would deadlock.}. If we let $X_g = (X + t_2 \cdot {\bf v}_1) + t_g \cdot {\bf v}_3$ and $Y_g = Y + t_g \cdot {\bf v}_3$, then:

\noindent $N_2 \dnar{t_g} ((S_h^2 + t_g) [{\sf C}] \, \langle {\sf atp}, X_g \rangle \, (S_a^2  + t_g) [{\sf AH}(t_a)] \big) \, \langle {\sf glc}, Y_g \rangle (S_g^2 + t_g )[{\sf GH} (t_g)] \nar{\rho} ((S_h^2 + t_g) [{\sf HEX}] \,\|\, (S_a^2  + t_g) [{\sf ADP}] \big) \, \|\, (S_g^2 + t_g )[{\sf G6P}] \dnar{\mts - t_g} \\ ((S_h^2 + \mts) [{\sf HEX}] \,\|\, (S_a^2  + \mts) [{\sf ADP}] \big) \, \|\, (S_g^2 + \mts )[{\sf G6P}] = N'_2$. Thus:

$$
(N_2, t_1 + t_2) \dnar{\mts} (N_3, t_1 + t_2 + \mts)
$$

where $N_3 = \move \; (t_1 + t_2 + \mts) \; N'_2 = (S_h^3 [{\sf HEX}] \,\|\, S_a^3 [{\sf ADP}] \big) \, \|\, S_g^3 [{\sf G6P}]$, $S_i^3 = (S_i^2 + \mts) \lsbrace {\bf v}_i \rsbrace $ and ${\bf v}_i = \move \; (t_1 + t_2 + \mts) \; (S_i^2 + \mts)$ for each $i \in \{h, a, g\}$.
\end{example}

\medskip

We can prove the following basic property of the Shape Calculus stating that any system evolution
does not introduce space inconsistencies like interpenetration of 3D processes or not well-formed
processes.

\begin{theorem}[Closure w.r.t.\ well-formedness]
Let $N$ be a well-formed network of 3D processes. If $(N,t) \dnar{t'} (M,t+t')$ then $M$ is
well-formed.
\end{theorem}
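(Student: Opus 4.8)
The plan is to proceed by a direct case analysis on the two clauses of Definition~\ref{def:mts}, since those are the only ways a single macro-step $(N,t)\dnar{t'}(M,t+t')$ can be derived. In both clauses the target network $M$ is built from $N$ by first letting a timed trace elapse (possibly interleaved with $\nar{\omega}$- and $\nar{\rho}$-split steps), then---in clause~(1) only---resolving the collisions detected at the first time of contact via $\nar{\kappa}$, and finally applying the velocity-update function $\move$. Each of these three ingredients already has a dedicated closure result, so the argument is essentially a matter of chaining them in the right order; the two clauses differ only in the presence or absence of the $\nar{\kappa}$ step, which lets me treat them almost uniformly.

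First I would dispatch the common initial segment. In clause~(1) we have $N \dnar{t'} N'$ with $t' = \ftoc(N) \leq \mts$, and in clause~(2) we have $N \dnar{t'} N'$ with $t' = \mts < \ftoc(N)$; in either case the hypothesis that $N$ is well-formed together with Proposition~\ref{prop:net-closure}(3) yields that $N'$ is well-formed. This single invocation suffices for the entire timed trace, including its interleaved split steps, because Proposition~\ref{prop:net-closure} already establishes closure for every kind of step a trace $\dnar{t'}$ may contain.

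Next, in clause~(1) only, I would handle the collision-resolution step $N' \nar{\kappa} N''$ by invoking Lemma~\ref{lemma:reduction-closure}, which states precisely that $N'$ well-formed and $N' \nar{\kappa} N''$ imply $N''$ well-formed; in clause~(2) one simply sets $N'' := N'$. At this point $N''$ is well-formed in both clauses, and it only remains to control the final application $M = \move\;(t+t')\;N''$.

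The only part not already covered verbatim by an earlier statement---and hence the main obstacle, though a mild one---is showing that $\move$ preserves network well-formedness. I would argue this directly from the definitions of $\move$ on networks (Definition~\ref{def:networks}) and of the velocity update $\lsbrace\cdot\rsbrace$ (Definition~\ref{def:shape-velocity}). The key observation is that a velocity update leaves the point set $\points{\cdot}$, the reference point and the boundary $\boundary{\cdot}$ of every shape unchanged, altering only the velocity component. Consequently, each process $P_i$ of $N''$ remains well-formed: the sole well-formedness condition mentioning velocities, namely that a compound process carry a single common velocity, is trivially met once the uniform velocity $\move\;(t+t')\;\shape(P_i)$ has been imposed on all its components (this is the process-level counterpart of Proposition~\ref{prop:shapes-wellformedness}, and the site condition $\reference(X,\referencepoint{S})\subseteq\boundary{S}$ is preserved because reference points and boundaries do not move). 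Moreover, for any two distinct processes the non-interpenetration requirement is unaffected, since interpenetration is a property of the point sets alone and these are not displaced by $\move$. Combining these two observations shows that $M = \move\;(t+t')\;N''$ is a well-formed network in both clauses, which completes the proof.
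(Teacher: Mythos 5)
Your proof is correct and follows essentially the same route as the paper's: both chain Proposition~\ref{prop:net-closure}(3) for the timed trace $N \dnar{t'} N'$ with Lemma~\ref{lemma:reduction-closure} for the $\nar{\kappa}$ step, treating clause~(2) of Definition~\ref{def:mts} as the degenerate case without collision resolution. The only difference is that the paper simply asserts that $M = \move\;(t+t')\;N''$ remains well-formed, whereas you spell out the justification (velocity updates leave point sets, reference points and boundaries unchanged, so per-process well-formedness and pairwise non-interpenetration are preserved, in the spirit of Proposition~\ref{prop:shapes-wellformedness} and Lemma~\ref{lemma:shape-properties2}) --- a detail the paper glosses over, and which you handle correctly.
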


\begin{proof}
Assume that $(N,t) \dnar{t'} (M,t+t')$ because of  $N \dnar{t'} N' \nar{\kappa} N''$ and $M = \move \; (t+t') \; N''$ (the other case  -- see Def.~\ref{def:mts} -- is similar). Then, by
Proposition~\ref{prop:net-closure}~(item~3) and Lemma~\ref{lemma:reduction-closure},  $N$ well-formed and
$N \dnar{t'} N' \nar{\kappa} N''$ implies $N''$  and hence $M = \move \; (t+t') \; N''$ well-formed.
\end{proof}

\section{Conclusions and Future Work}
\label{sec:conclusion}

We have defined the full timed operational semantics of the Shape Calculus and we have introduced a notion of well-formedness of the different objects of the calculus. We proved that the evolution of a well-formed network of 3D processes is always well-formed, that is to say, no spatial or temporal inconsistencies can be introduced by the dynamics of the calculus. As future work we intend to provide verification techniques for the Shape Calculus. In order to do this we believe that a sort of ``tailoring'' should be made on the calculus, making some parts (e.g.\ movement) more abstract and other parts (e.g.\ behaviours) more specific adding quantitative information (for instance probabilities or costs). The whole process will then be supported by the definition of proper logical languages to specify properties of interests. Of course we expect that some approximations will be necessary to reach computability and/or feasibility. Another direction of future work is the possibility to include in the calculus some new useful, but in some cases complex, concepts such as re-shaping, message passing of values, and communication by perception of a compatible process in the neighbourhood.

\bibliographystyle{abbrv}
\bibliography{shapecalculus2}

\newpage

\section*{Appendix A: Proofs of Section~\ref{sec:3ds}}
The following lemmas are first needed.
\begin{lemma}\label{lemma:shape-properties1}
Let $S \in \shapes$ well-formed and $t \in \timedomain$. Then:

1. $\velocity{S+t} = \velocity{S}$ and $\mass{S+t} = \mass{S}$;

2. $\referencepoint{S+t} = \referencepoint{S} + t \cdot \velocity{S}$;

3. $\points{S+t} = \points{S} + t \cdot \velocity{S}$ (and, hence, $\boundary{S+t} = \boundary{S}
+ t \cdot \velocity{S}$).
\end{lemma}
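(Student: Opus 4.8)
The plan is to argue by structural induction on $S$, establishing all of items~1--3 (including the parenthetical claim about $\boundary{S+t}$) simultaneously, so that each may be used as an induction hypothesis on proper subshapes. For the base case $S = \bs{V}{m}{\mathbf{p}}{v}$ I would simply unfold Definition~\ref{def:shapemoving}, obtaining $S + t = \bs{V + (t\cdot\vect{v})}{m}{\mathbf{p} + (t\cdot\vect{v})}{v}$, and then read off the components via Definition~\ref{def:shapes}: this gives $\velocity{S+t} = \velocity{S}$, $\mass{S+t} = \mass{S}$, $\referencepoint{S+t} = \mathbf{p} + t\cdot\vect{v} = \referencepoint{S} + t\cdot\velocity{S}$ and $\points{S+t} = V + t\cdot\vect{v} = \points{S} + t\cdot\velocity{S}$, with the boundary equality following from the points equality as explained below.

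For the inductive step $S = S_1 \union{X} S_2$, well-formedness (Definition~\ref{def:shapeswf}) provides two facts I would exploit throughout: the subshapes $S_1, S_2$ are themselves well-formed, and they share a single common velocity $\velocity{S_1} = \velocity{S_2} = \velocity{S} = \vect{v}$. By Definition~\ref{def:shapemoving} we have $S + t = (S_1+t)\union{X + t\cdot\vect{v}}(S_2+t)$, which is well-formed by Proposition~\ref{prop:shapes-wellformedness}. The velocity and mass claims are then immediate from the recursive definitions together with the induction hypotheses, and for the points I would use that translation by the fixed vector $t\cdot\vect{v}$ distributes over union, so $\points{S+t} = (\points{S_1} + t\cdot\vect{v}) \cup (\points{S_2} + t\cdot\vect{v}) = \points{S} + t\cdot\vect{v}$.

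The only step demanding genuine care is the reference point, and this is where I expect the (modest) main obstacle to lie. The induction hypotheses give $\referencepoint{S_i+t} = \referencepoint{S_i} + t\cdot\vect{v}$ for $i \in \{1,2\}$ with \emph{the same} translation vector in both cases --- this is precisely where the common-velocity clause of well-formedness is indispensable. Substituting into the barycentric formula for $\referencepoint{S+t}$ and factoring, the two contributions $t\cdot\vect{v}$ recombine with total weight $\mass{S_1} + \mass{S_2}$, yielding $\referencepoint{S+t} = \referencepoint{S} + t\cdot\vect{v}$. Without the well-formedness hypothesis the two components would drift by different amounts and the centre of mass would not translate rigidly, so the equality would fail.

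For the boundary I would use that the translation $\mathbf{x} \mapsto \mathbf{x} + \mathbf{w}$ is a homeomorphism of $\real^3$ and hence carries the interior points of any set $V$ bijectively onto the interior points of $V + \mathbf{w}$. Combining this with the already-established points equality and with the fact that union and set difference commute with a fixed translation, the recursive definition of $\boundary{S+t}$ reduces termwise to $\boundary{S} + t\cdot\vect{v}$, which closes the induction.
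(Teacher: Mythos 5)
Your proposal is correct and follows essentially the same route as the paper's own proof: structural induction on well-formed $S$, reading the base case off Definition~\ref{def:shapemoving}, and in the compound case using the common-velocity clause of well-formedness so that both induction hypotheses translate by the \emph{same} vector $t\cdot\vect{v}$, after which the barycentric computation for $\referencepoint{S+t}$ factors exactly as you describe. If anything, you are more careful than the paper on two points: the paper's treatment of item~3 in the compound case silently drops the translation (it writes $\points{S+t} = \points{S_1}\cup\points{S_2} = \points{S}$, evidently a typo for the translated version), and the parenthetical boundary claim is asserted there without argument, whereas your observation that translation is a homeomorphism of $\real^3$ commuting with union, difference and the taking of interior points supplies the missing step.

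One blemish you should remove: in the inductive step you appeal to Proposition~\ref{prop:shapes-wellformedness} to conclude that $S+t$ is well-formed, but in the paper's logical order that proposition is proved \emph{using} this very lemma, so the citation is circular. Fortunately it is also idle: no step of your argument actually uses well-formedness of $S+t$ --- the velocity, mass, points, reference-point and boundary computations all follow from the recursive definitions of Definition~\ref{def:shapes} together with the induction hypotheses and the well-formedness of $S$ itself (which gives well-formedness of $S_1$ and $S_2$ and the shared velocity). Delete that clause and the proof stands as a faithful, slightly more complete rendering of the paper's argument.
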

\begin{proof}
We prove Items 1, 2 and 3 by induction on $S \in \shapes$ well-formed.

\smallskip \noindent 
{\em Basic:} $S= \langle V, m, \vect{p}, \vect{v} \rangle $, $S+t = \langle V + (t \cdot \vect{v}),
m, \vect{p} + (t \cdot \vect{v}), \vect{v} \rangle$ and:
 
\noindent 1. $\velocity{S+t} = \vect{v} = \velocity{S}$ and $\mass{S+t} = m = \mass{S}$;

\noindent 2. $\referencepoint{S+t} = \vect{p} + (t \cdot \vect{v}) = \referencepoint{S} + (t \cdot
\vect{v})$;

\noindent 3. $\points{S+t} = V + (t \cdot \velocity{S})  = \points{S} + (t \cdot \velocity{S})$.

\par\medskip\noindent
{\em Comp:} $S= S_1 \union{X} S_2$ and $S+t = (S_1 +t) \union{Y} (S_2+t)$ where $Y = X + t
\cdot {\bf v}$ and ${\bf v} = \velocity{S} = \velocity{S_i}$. By induction hypothesis:

\par\medskip\noindent 1. $\velocity{S_i+t}= \velocity{S_i}= {\bf v}$ and $\mass{S_i+t} = \mass{S_i}$
for $i=1,2$.
Thus: $\velocity{S+t}= {\bf v}$ and $\mass{S+t} = \mass{S_1+t} + \mass{S_2+t} = \mass{S_i} +
\mass{S_2} = \mass{S}$.

\par\medskip\noindent 2.  $\referencepoint{S+t} = \frac{\sum\limits_{i=1}^2 \mass{S_i + t} \cdot
\referencepoint{S_i + t}} {\sum\limits_{i=1}^2 \mass{S_i + t}}
= \frac{\sum\limits_{i=1}^2 \mass{S_i} \cdot \big(\referencepoint{S_i } + t \cdot \vect{v} \big)}
{\sum\limits_{i=1}^2 \mass{S_i}}
= \frac{\sum\limits_{i=1}^2 \mass{S_i} \cdot \referencepoint{S_i } } {\sum\limits_{i=1}^2
\mass{S_i}} +  \\ t \cdot \vect{v} = \referencepoint{S} +  t \cdot \vect{v}$

\vspace*{0.2cm}

\noindent 3. $\points{S+t} = \points{S_1 +t} \cup \points{S_2 +t} = \points{S_1} \cup
\points{S_2}=\points{S}$.

\end{proof}

\par\medskip\noindent
To prove Prop.~\ref{prop:shapes-wellformedness} we also need the following lemma; its proof has been
omitted because it is similar to that of Lemma~\ref{lemma:shape-properties1}.
\begin{lemma}\label{lemma:shape-properties2}
Let $S \in \shapes$ well-formed and ${\bf w} \in \vel$. Then:

1. $\velocity{S\lsbrace\vect{w}\rsbrace} =\{\vect{w}\}$ and $\mass{S\lsbrace\vect{w}\rsbrace} =
\mass{S}$;

2. $\points{S\lsbrace\vect{w}\rsbrace} = \points{S} $ (and, hence,
$\boundary{S\lsbrace\vect{w}\rsbrace} = \boundary{S}$).
\end{lemma}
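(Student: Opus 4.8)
The plan is to mirror the structure of the proof of Lemma~\ref{lemma:shape-properties1}, proceeding by structural induction on the well-formed shape $S \in \shapes$ and reading off each clause directly from the defining equations of $S\lsbrace\vect{w}\rsbrace$ in Def.~\ref{def:shape-velocity}, together with the definitions of $\velocity{\cdot}$, $\mass{\cdot}$, $\points{\cdot}$ and $\boundary{\cdot}$ in Def.~\ref{def:shapes}.

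For the base case, $S = \bs{V}{m}{\vect{p}}{v}$ is a basic shape and the \emph{Basic} clause of Def.~\ref{def:shape-velocity} gives $S\lsbrace\vect{w}\rsbrace = \bs{V}{m}{\vect{p}}{w}$: only the velocity component is rewritten, while $V$, $m$ and $\vect{p}$ are left untouched. Hence $\velocity{S\lsbrace\vect{w}\rsbrace} = \{\vect{w}\}$ and $\mass{S\lsbrace\vect{w}\rsbrace} = m = \mass{S}$, establishing item~1, and $\points{S\lsbrace\vect{w}\rsbrace} = V = \points{S}$, establishing item~2; the boundary claim then follows because $\boundary{\cdot}$ is determined solely by $\points{\cdot}$.

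For the inductive step, $S = S_1 \union{X} S_2$ is well-formed, so by condition~1 of Def.~\ref{def:shapeswf} both $S_1$ and $S_2$ are well-formed and the induction hypothesis applies to each of them. The \emph{Comp} clause of Def.~\ref{def:shape-velocity} gives $S\lsbrace\vect{w}\rsbrace = (S_1\lsbrace\vect{w}\rsbrace) \union{X} (S_2\lsbrace\vect{w}\rsbrace)$ --- note that, in contrast with $S+t$, the surface of contact $X$ is left unchanged by the velocity update. Using the compound clauses of Def.~\ref{def:shapes} and the induction hypotheses $\velocity{S_i\lsbrace\vect{w}\rsbrace} = \{\vect{w}\}$, $\mass{S_i\lsbrace\vect{w}\rsbrace} = \mass{S_i}$ and $\points{S_i\lsbrace\vect{w}\rsbrace} = \points{S_i}$ for $i=1,2$, I obtain $\velocity{S\lsbrace\vect{w}\rsbrace} = \{\vect{w}\} \cup \{\vect{w}\} = \{\vect{w}\}$, $\mass{S\lsbrace\vect{w}\rsbrace} = \mass{S_1} + \mass{S_2} = \mass{S}$ and $\points{S\lsbrace\vect{w}\rsbrace} = \points{S_1} \cup \points{S_2} = \points{S}$, which establish items~1 and~2.

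The only step needing a little care is the boundary identity in the compound case. Since $\boundary{S}$ in Def.~\ref{def:shapes} is built from $\boundary{S_1} \cup \boundary{S_2}$ by removing the points interior to $\points{S_1 \union{X} S_2}$, I would observe that the induction hypothesis already yields $\boundary{S_i\lsbrace\vect{w}\rsbrace} = \boundary{S_i}$ and that $\points{(S_1\lsbrace\vect{w}\rsbrace) \union{X} (S_2\lsbrace\vect{w}\rsbrace)} = \points{S_1} \cup \points{S_2} = \points{S}$; hence the set of interior points that gets removed is identical before and after the update, and $\boundary{S\lsbrace\vect{w}\rsbrace} = \boundary{S}$. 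There is no genuine obstacle: the whole argument is a routine structural induction, strictly simpler than the one for Lemma~\ref{lemma:shape-properties1}, because the velocity update --- unlike the advance of time --- translates no geometric data, fixing $V$, $\vect{p}$ and $X$ outright, so no shift by $t\cdot\vect{v}$ ever enters the bookkeeping.
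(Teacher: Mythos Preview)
Your proof is correct and follows exactly the approach the paper intends: the paper omits the proof of this lemma, stating only that it is similar to that of Lemma~\ref{lemma:shape-properties1}, and your structural induction on well-formed $S$ is precisely that omitted argument. Your extra care with the compound boundary case is sound and, as you note, the whole thing is even simpler than Lemma~\ref{lemma:shape-properties1} because no translation by $t\cdot\vect{v}$ is involved.
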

\par\medskip\noindent
{\bf Proposition~\ref{prop:shapes-wellformedness}}
{\em Let $S \in \shapes$, $t\in \timedomain$ and $\vect{w}\in\vel$. If $S$ is well-formed then $S+t$
and $S\lsbrace\vect{w}\rsbrace$ are well-formed.}

\begin{proof}
We prove these statement by induction in $S \in \shapes$.
\par\medskip\noindent
{\em Basic:} $S = \bs{V}{m}{\mathbf{p}}{v}$. Both
$S + t = \bs{V+ t \cdot {\bf v}}{m}{{\bf p}+t\cdot \mathbf{v}}{v}$ and
$S\lsbrace\vect{w}\rsbrace = \bs{V}{m}{\mathbf{p}}{w}$ are well-formed shapes.
\par\medskip\noindent
{\em Comp:} $S = S_1 \union{X} S_2$.   Let $\vect{v} = \velocity{S} = \velocity{S_i}$ for
$i=1,2$. By Defs~\ref{def:shapemoving} and~\ref{def:shape-velocity}, it is $S+ t = (S_1+t) \union{Y}
(S_2 + t)$ where $Y = X + t \cdot \vect{v}$, and $S\lsbrace\vect{w}\rsbrace =
(S_1\lsbrace\vect{w}\rsbrace) \union{X} (S_2\lsbrace\vect{w}\rsbrace)$.
We first prove $S +t$ wellformedness.

\par\medskip\noindent 1. By induction hypothesis, $S_1 + t$ and $S_2 +t$ are well-formed;

\par\medskip\noindent 2. $S$ well-formed implies that $X$ is non-empty set that is equal to
$\points{S_1} \cap \points{S_2} = \boundary{S_1} \cap \boundary{S_2}$. By
Lemma~\ref{lemma:shape-properties1}, we also
have that $Y = X + t \cdot \vect{v}$ is non-empty and $Y = \points{S_1 + t} \cap
\points{S_2 + t} = \boundary{S_1 + t} \cap \boundary{S_2 + t}$.

\noindent 3. By Lemma~\ref{lemma:shape-properties1}-1, $\velocity{S + t} = \vect{v}$ where
$\vect{v} = \velocity{S_i + t} = \velocity{S_i}$ for $i =1,2$.

\par\medskip\noindent Similarly we can prove that $S\lsbrace\vect{w}\rsbrace$ is well-formed. Shapes
$S_1\lsbrace\vect{w}\rsbrace$ and $S_2\lsbrace\vect{w}\rsbrace$ are well-formed (this follows by
induction hypothesis); by Lemma~\ref{lemma:shape-properties2}-1 it is
$\velocity{S\lsbrace\vect{w}\rsbrace} = \vect{w}$; finally $\emptyset \neq X = \points{S_1} \cap
\points{S_2} = \boundary{S_1} \cap \boundary{S_2}$,
$\points{S_i} = \points{S_i\lsbrace\vect{w}\rsbrace}$ and  $\boundary{S_i} = \boundary{S_i
\lsbrace\vect{w}\rsbrace}$ for $i=1,2$ (by Lemma~\ref{lemma:shape-properties2}-2) imply
$\emptyset \neq X = \points{S_1 \lsbrace\vect{w}\rsbrace } \cap \points{S_2
\lsbrace\vect{w}\rsbrace} = \boundary{S_1 \lsbrace\vect{w}\rsbrace} \cap \boundary{S_2
\lsbrace\vect{w}\rsbrace}$.
\end{proof}
\section*{Appendix B: Proofs of Section~\ref{sec:three}}
This section is devoted to prove main results stated in Section~\ref{sec:three}.

\par\medskip\noindent {\bf Proposition~\ref{prop:closure-3d}}
\em Let $P \in \proc$ well-formed. Either $P \wnar{t} Q$ or $P \nar{\mu}{} Q$ implies $Q \in \proc$
well-formed.\rm

\smallskip\noindent\begin{proof}
We first prove (by induction on $P\in \proc$) that:

\par\medskip\noindent 1. $P \wnar{t} Q$ implies $\shape(Q) = \shape(P) + t $;

\par\medskip\noindent 2. $P \nar{\mu} Q$ implies $\shape(Q) = \shape(P)$.

\par\medskip\noindent
{\em Basic:} $ P = S[B] $.

\noindent 1. By operational rules, $P \wnar{t} Q$ implies $B \nar{t} B'$ and $Q= (S+t)[B']$. Thus:
$\shape(Q) = S+t = \shape(P) + t$.

\noindent 2. $P \nar{\mu} Q$ implies $Q= S[B']$ for a proper $B'\in \bpa$ (see rule \name{Basic$_c$}
in Table~\ref{table:proc-behaviour}). Hence $\shape(Q) = S = \shape(P)$.

\par\bigskip\noindent
{\em Comp:} $ P = P_1 \unionc{a,X} P_2$.

\noindent 1. $P \wnar{t} Q$ implies $P_1 \wnar{t} Q_1$, $P_2 \wnar{t} Q_2$, and $Q= Q_1
\unionc{a, Y} Q_2$ where $Y = X + (t \cdot \velocity{P})$.
By induction hypothesis,  $\shape(Q) = \shape(Q_1) \union{Y}
\shape(Q_2) = (\shape(P_1)+t) \union{Y} (\shape(P_2)  + t) = \shape(P) + t$.

\noindent 2. $P \nar{\mu} Q$ implies either $P_1 \nar{\mu} Q_1$ and $Q= Q_1
\unionc{a,X} P_2$ or $P_2
\nar{\mu} Q_2$ and $Q= P_1 \unionc{a,X} Q_2$. Let us consider the former case
(the latter one can be proved similarly). By induction hypothesis it is $\shape(Q) =
\shape(Q_1) \union{X} \shape(P_2) = \shape(P_1) \union{X} \shape(P_2)  = \shape(P)$.

\par\medskip\noindent
Now we prove that $P$ well-formed and $P \wnar{t} Q$ imply $Q$ well-formed. Again, we proceed by
induction on $P$ well-formed.
\par\medskip\noindent
{\em Basic:}$ P = S[B] $. If $P \wnar{t} Q$ then $Q= (S+t)[B']$ where $B \wnar{t} B'$. In this
case, $S$ well-formed and Prop.~\ref{prop:shapes-wellformedness}
imply $\shape(Q) = S + t$ well-formed. Moreover $\reference(X, \referencepoint{P}) \subseteq
\boundary{P}$ for each site $X \subseteq \pos$ that occurs in $B$ and, hence, in $B'$. By
Lemma~\ref{lemma:shape-properties1}-4,  $\reference(X, \referencepoint{Q}) = \reference(X,
\referencepoint{P} + t \cdot \velocity{P}) = \reference(X, \referencepoint{P})  + t \cdot
\velocity{P} \subseteq \boundary{P} + t\cdot \velocity{P} = \boundary{Q}$. Thus,  $Q$
is well-formed.

\par\medskip\noindent
{\em Comp:} $ P = P_1 \unionc{a,X} P_2$ with $P_i$ well-formed for $i=1,2$ and $X
\subseteq \boundary{P_1} \cap \boundary{P_2}$.
$P \wnar{t} Q$ implies $P_i \wnar{t} Q_i$, for $i=1,2$, and $Q= Q_1 \unionc{a, Y} Q_2$ with
$Y = X + (t \cdot \velocity{P})$. By induction hypothesis, $Q_1$ and $Q_2$ are well-formed.
Moreover, $X \subseteq \boundary{P_1} \cap \boundary{P_2}$ implies $Y = X + t \cdot
\velocity{P} \subseteq (\boundary{P_1} \cap \boundary{P_2}) + t \cdot \velocity{P} =
(\boundary{P_1} + t \cdot \velocity{P}) \cap (\boundary{P_2} + t \cdot \velocity{P}) =
\boundary{Q_1} \cap \boundary{Q_2}$.

\par\medskip\noindent
It remains to prove, by induction on $P$, that $P$ well-formed and $P \nar{\mu} Q$ imply $Q$ well-formed. 

\par\medskip\noindent
{\em Basic:} $P = S[B]$.
If $P \nar{\mu} Q$ then $B \nar{\mu} B'$ for a proper $B' \in \proc$ and $Q= S[B']$. $P$ well-formed
implies $S$ well-formed. Moreover, for each $X$ that occurs in $B'$ (and hence in $B$),
$\reference(X, \referencepoint{P}) \subseteq \boundary{P} = \boundary{S} = \boundary{Q}$. So, $Q$
is well-formed.
\par\medskip\noindent
{\em Comp:} $ P = P_1 \unionc{a,X} P_2$ with $P_i$ well-formed for $i=1,2$ and $X
\subseteq \boundary{P_1} \cap \boundary{P_2}$.
$P \nar{\mu} Q$ implies either $P_1 \nar{\mu} Q_1$ and $Q= Q_1 \unionc{a,X} P_2$ or $P_2
\nar{\mu} Q_2$ and $Q= P_1 \unionc{a,X} Q_2$. We only prove the former case (the latter one is
similar). By induction hypothesis $Q_1$ is well-formed; moreover, $\shape(Q_1) = \shape(P_1)$ and,
hence, $\boundary{Q_1} = \boundary{P_1}$.
Thus, $Q_1$ and $P_2$ are well-formed with $X \subseteq \boundary{P_1} \cap
\boundary{P_2} = \boundary{Q_1} \cap \boundary{P_2}$, i.e.\ $Q$ is well-formed.
\end{proof}
\par\medskip\noindent
{\bf Proposition~\ref{prop:splitwf}}
\em Let $P \in \proc$ well-formed. If $\langle a, X \rangle \in \bonds(P)$ there is a well-formed 3D
process $Q \unionc{a, X} R \equiv_P P$.\rm

\smallskip\noindent
\begin{proof}  By induction on $P$ well-formed.

\par\medskip\noindent
{\em Basic:} $ P = S[B] $. This case is not possible since $\bonds(P) = \emptyset$.
\par\medskip\noindent
{\em Comp:} $ P = P_1 \unionc{b,Y} P_2$ with $P_i$ well-formed for $i=1,2$, $Y
\subseteq \boundary{P_1} \cap \boundary{P_2}$ and $\velocity{P_1} = \velocity{P_2} = \velocity{P}
= \vect{v}$.
If $\langle a, X \rangle = \langle b, Y \rangle$ it suffices to choose $Q=P_1$ and $R=P_2$. Assume
$\langle a, X \rangle \in \bonds(P_1)$ (the case in which $\langle a, X \rangle \in \bonds(P_2)$
is symmetric).  By induction hypothesis, $Q_1 \unionc{a,X} R_1 \equiv_P P_1$ is
well-formed and $P = P_1 \unionc{b,Y} P_2 \equiv_P (Q_1 \unionc{a, X} R_1) \unionc{b,Y} P_2 $.
We have  two possible subcases:

\noindent 1. $Y \subseteq \boundary{R_1}$ and $P \equiv_P Q  \unionc{a, X} R$ where $Q = Q_1$ and $R
= R_1 \unionc{b, Y} P_2$, 

\noindent 2. $Y \subseteq \boundary{Q_1}$ and $P \equiv_P Q \unionc{a, X} R$ where $Q = R_1$ and $R
= Q_1 \unionc{b, Y} P_2$.
We only consider the former case (the latter one is similar) and prove that $Q \unionc{a,X} R$ is 
well-formed. To this aim we have to show that: ({\it i}) $R = R_1 \unionc{b, Y} P_2$ is
well-formed (note that $P_1 \equiv_P Q_1 \unionc{a,X} R_1$ well-formed and $Q = Q_1$ implies $Q_1$
well-formed), ({\it ii}) $\velocity{Q} = \velocity{R}$ (this follows easily because $P$ well-formed
implies $\velocity{Q_1}$ (i.e.\ $\velocity{Q}$)$= \velocity{R_1}  = \velocity{P_2} =\velocity{R}$)
and ({\it iii}) $X \subseteq \boundary{Q} \cap \boundary{R}$. Below we prove first ({\it i}) and
then ({\it iii})

\par\smallskip\noindent $P_1 \equiv_P Q_1 \unionc{a,X} R_1$ and $P = P_1 \unionc{b,Y} P_2$
well-formed imply, resp., $R_1$ and $P_2$ well-formed; moreover, $\velocity{R_1} = \velocity{P_1} =
\vect{v} = \velocity{P_2}$ (because $P_1 \equiv_P Q_1 \unionc{a,X} R_1$ is well-formed). To show
that $R = R_1 \unionc{b, Y} P_2$ is well-formed,  it remains to prove that $Y \subseteq
\boundary{R_1} \cap \boundary{P_2}$. We proceed by contradiction. Assume $\vect{y} \in Y$ such that
$\vect{y} \notin \boundary{R_1} \cap \boundary{P_2}$. Then, $Y \subseteq \boundary{R_1}$ implies
$\vect{y} \notin \boundary{P_2}$ and, hence, $\vect{y} \notin \boundary{P_1} \cap \boundary{P_2}$.
But, this is impossible because $Y \subseteq \boundary{P_1} \cap \boundary{P_2}$.

\par\smallskip\noindent
We prove that $X \subseteq \boundary{Q} \cap \boundary{R} = \boundary{Q_1} \cap \boundary{R}$ by
contradiction. Recall that $Q_1 \unionc{a,X} R_1$  well-formed implies $X \subseteq \boundary{Q_1}
\cap \boundary{R_1}$. Assume that there is $\vect{x} \in X \subseteq \boundary{Q_1} \cap
\boundary{R_1}$ such that $\vect{x} \notin \boundary{Q_1} \cap \boundary{R}$. Then $\vect{x} \notin
\boundary{R} = (\boundary{R_1} \cup \boundary{P_2}) \backslash \{\vect{z} \,|\, \vect{z} \mbox{ is
interior of } \points{R_1} \cup \points{P_2}\}$. Moreover, $\vect{x} \in X \subseteq \boundary{R_1}
\subseteq \boundary{R_1} \cap \boundary{P_2}$ and $\vect{x} \notin \boundary{R}$ implies that
$\vect{x}$ is an interior point of $\points{R_1} \cup  \points{P_2}$, i.e. $\vect{x}$ is an interior
point of $\points{P_2}$; this is  because $\vect{x} \in \boundary{R_1}$ implies that $\vect{x}$
cannot be an interior point of $\points{R_1}$. Finally, $\vect{x} \in X$ implies that $\vect{x}$ is
also an interior  point of $P_1 \equiv_P Q_1 \unionc{a, X} R_1$. Thus, $P_1$ and $P_2$
interpenetrate each other. But this is not possible since $P \equiv_P P_1 \unionc{b, Y} P_2$ is 
well-formed.
\end{proof}

\par\bigskip\noindent
{\bf Proposition~\ref{prop:closure-splittings}}{\it
Let $P, Q\in \proc$  and $\mu \in \omega(\channels) \cup \rho(\channels)$. Then:

\par\medskip\noindent 1. $P \dnar{\mu} Q$ implies $\shape(Q) = \shape(P)$;

\par\smallskip\noindent 2.  $P$ well-formed and $P\dnar{\mu} Q$ implies $Q$ well-formed.}

\par\smallskip\noindent \begin{proof}
We only prove the statement for $\mu = \rho(a,X)$; if $\mu = \omega(a,X)$ the statement can be
proved similarly.

\par\medskip\noindent
{\em Basic:} $ P = S[B] $. This case is not possible since $P
\not{\dnar{\rho(a,X)}}$.
\par\medskip\noindent
{\em Comp:} $ P = P_1 \unionc{b,Y} P_2$ with $P_i$ well-formed for $i=1,2$ and $X \subseteq
\boundary{P_1} \cap \boundary{P_2}$. We distinguish two possible subcases:

\noindent ({\it i}) $\langle b, Y \rangle = \langle a, X \rangle$, $P_1 \nar{\rho(\alpha, X_1)}
Q_1$, $P_2 \nar{\rho(\overline{\alpha}, X_2)} Q_2$ with $\alpha \in \{a, \overline{a}\}$ and $X =
X_1 \cap X_2$, and $Q= Q_1 \unionc{a,X} Q_2$.

\par\medskip\noindent 1. By Prop.~\ref{prop:closure-3d}-2, $\shape(Q_i) = \shape(P_i)$ for $i=1,2$
and, hence, $\shape(Q) = \shape(Q_1) \union{X} \shape(Q_2) = \shape(P_1) \union{X} \shape(P_2)  =
\shape(P)$.

\par\smallskip\noindent 2. By induction hypothesis, $P_i$ well-formed implies $Q_i$ well-formed, for
$i=1,2$. Moreover $X 
\subseteq \boundary{P_1} \cap \boundary{P_2}$ and, again, $\shape(Q_i) = \shape(P_i)$ for $i=1,2$
imply $X \subseteq \boundary{Q_1} \cap \boundary{Q_2}$. So the 3D process $Q$ is well-formed.

\par\medskip\noindent ({\it ii}) Either $P_1 \dnar{\rho(a,X)} Q_1$ and $Q= Q_1
\unionc{a,X} P_2$ or $P_2 \dnar{\rho(a,X)} Q_2$ and
$Q= P_1 \unionc{a,X} Q_2$.

\par\medskip\noindent 1. By induction hypothesis it is $\shape(Q_i) = \shape(P_i)$ for $i=1,2$. As
in the previous case, we can prove that $\shape(Q) = \shape(P)$.

\par\smallskip\noindent 2. By induction hypothesis we have that $Q_i$ and  $P_j$, for $\{i,j\} \in
\{1,2\}$, are well-formed. Moreover, by Item 1, $\shape(Q_i)= \shape(P_i)$, and hence $X \subseteq
\boundary{P_1} \cap \boundary{P_2} = \boundary{Q_i} \cap \boundary{P_j}$. Also in this case we can
conclude that $Q$ is well-formed.
\end{proof}

\section*{Appendix C: Proofs of Section~\ref{sec:networks}}
\par\medskip\noindent
{\bf Proposition~\ref{prop:split-closure}}
\em Let $P \in \proc$ well-formed and $C \subseteq \channels$. Then $\Split(P,C)$ is
a well-formed network of 3D processes \rm.

\medskip

\begin {proof}
By induction on the number of channels in $\bonds(P) \cap C$.
\par\medskip\noindent
$\bonds(P) \cap C = \emptyset$. In such a case $\Split(P,C) = P$  is  well-formed.
\par\medskip\noindent
$\langle a, X \rangle \in \bonds(P) \cap C \neq \emptyset$. By
Prop.~\ref{prop:splitwf}, there are $P_1, P_2$ well-formed such that $P
\equiv_P P_1 \unionc{a, X} P_2$ and $\Split(P,C) = \Split(P_1, C) \,\|\, \Split(P_2, C) = N_1 \,\|\,
N_2$. By induction hypothesis $N_i$ is a well-formed 3D network for $i=1,2$. Now, in
order to prove that $\Split(P, C)$ is  well-formed, it still remains to prove that if $Q_1$
and $Q_2$ are two 3D processes in the network $N_1$ and $N_2$,  respectively, then $Q_1$ and $Q_2$
do not interpenetrate.
Assume, towards a contradiction, that there are $Q_1$ and $Q_2$ in $N_1$ and $N_2$, respectively,
such that $Q_1$ and $Q_2$ interpenetrate, i.e.\ such that there is (at least) a point $\vect{x}$
that is interior of both $Q_1$ and $Q_2$.

For each $i=1,2$, it holds that either $\bonds(P_i) \cap C = \emptyset$,  $N_i =
P_i$ and $Q_i = P_i$ or $\bonds(P_i) \cap C = \{\unionc{a_1, X_1}, \unionc{a_2, X_2},.. ,
\unionc{a_{ni}, X_{ni}}\}$, $P_i \equiv_P P_i^1 \unionc{a_1, X_1} (P_i^2 \unionc{a_2, X_2} ..
(P_i^{ni} \unionc{a_{ni}, X_{ni}} P_i^{ni + 1}))$ (this follows by iterating\\ 
Prop.~\ref{prop:splitwf}), $N_i = P_i^1 \,\|\, (P_i^2 \,\|\, \cdots (P_i^{ni} \,\|\, P_i^{ni+1}))$
and $Q_i = P_i^j$ for some $j = 1,2, \cdots, ni, ni+1$. Thus, in both cases the set of interior
points of $Q_i$ is a subset of interior point of $P_i$. As a consequence, if
$Q_1$ and $Q_2$ interpenetrate so do $P_1$ and $P_2$. But this is not possible because
$P$ is well-formed.
\end {proof}
To prove Prop.~\ref{prop:net-closure} we need the following preliminary result.
\begin{lemma}\label{lemma:net-time}
Let $N, M \in \nets$, $P, Q \in \proc$ and $t \in \timedomain$ such that $N \nar{t} M$ and $P
\nar{t} Q$. The network $N$ contains $P$ iff $M$ contains $Q$.
\end{lemma}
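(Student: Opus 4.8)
The plan is to reduce the network-level statement to a determinism property of the single-process time step, after first exposing the component-wise structure of $N \nar{t} M$. I would argue, by induction on the derivation of $N \nar{t} M$, that the only applicable network rules of Table~\ref{table:nets} are \name{Empty$_t$} and \name{Par$_t$} (with the leaves of the parallel tree taking the process-level step $\nar{t}$ of Def.~\ref{def:strong-timed-behaviour}), so the derivation mirrors the parallel structure of $N$. Concretely, writing $N = (\parproc P_i)_{i \in I}$ one obtains $M = (\parproc Q_i)_{i \in I}$ over the same index set with $P_i \nar{t} Q_i$ for every $i \in I$. Under this correspondence, ``$N$ contains $P$'' means $P = P_{i_0}$ for some $i_0$, and ``$M$ contains $Q$'' means $Q = Q_{j_0}$ for some $j_0$, so the lemma follows once the map $P_i \mapsto Q_i$ induced by $\nar{t}$ is shown to be both functional and injective on components.

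For the ($\Rightarrow$) direction I would use forward time-determinism of the process step: if $P \nar{t} Q'$ and $P \nar{t} Q$ then $Q' = Q$. Taking $P = P_{i_0}$, the component step gives $P_{i_0} \nar{t} Q_{i_0}$ while the hypothesis gives $P = P_{i_0} \nar{t} Q$, whence $Q_{i_0} = Q$ and $M$ contains $Q$. Forward determinism itself I would establish by a routine induction on the rules of Table~\ref{table:beh-temporal}: each behaviour rule fixes a unique left-hand form ($\nil$, $\mu.B$, $\rho(L).B$, $B_1 + B_2$, $\epsilon(t').B$, or a constant $K$) and forces its right-hand side, so $B \wnar{t} B'$ determines $B'$ uniquely; lifting through \name{Basic$_t$} and \name{Comp$_t$} is immediate since the shape update $S \mapsto S+t$ is a function and the side condition $X' = X + t\cdot\velocity{P}$ is deterministic. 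The extra side condition of Def.~\ref{def:strong-timed-behaviour} only restricts which $t$-steps exist, not the identity of the target.

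The ($\Leftarrow$) direction is where the real work lies, because it requires injectivity (reverse determinism) of the single-process step: from $Q = Q_{j_0}$ with $P_{j_0} \nar{t} Q_{j_0}$ and the hypothesis $P \nar{t} Q$, I must conclude $P = P_{j_0}$, so that $N$ contains $P$. I would reduce this to two facts. First, the shape part is invertible: by Lemma~\ref{lemma:shape-properties1} we have $\velocity{S+t} = \velocity{S}$ and $\points{S+t} = \points{S} + t\cdot\velocity{S}$, so $\points{S} = \points{S+t} - t\cdot\velocity{S+t}$ is recovered by translating back, and hence $S_1 + t = S_2 + t$ forces $S_1 = S_2$. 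Second, the behaviour delay $\wnar{t}$ must be shown reverse-deterministic, again by induction on Table~\ref{table:beh-temporal}: the target of \name{Nil$_t$}, \name{Pref$_t$} and \name{Str$_t$} equals its source, the \name{Del$_t$} source is read off as $\epsilon(t'+t)$ from a target $\epsilon(t')$, and \name{Sum$_t$} keeps the top-level $+$ (and hence both summand predecessors, by induction) visible.

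I expect the genuine obstacle to sit in the \name{Def$_t$} case, where a constant $K \equaldef B$ and its unfolding $B$ both step to the same target $B'$, so that reverse determinism cannot hold as a blanket statement about arbitrary $\bpa$-terms. I would therefore argue injectivity not globally but along the fixed derivation of $N \nar{t} M$: the occurrence of $Q$ in $M$ is produced by one designated leaf, whose source is exactly the matching component of $N$, and the standing convention on process definitions guarantees that the tracked component is the intended $P$ rather than a spurious $\wnar{t}$-predecessor. With this identification secured, $P = P_{j_0}$, the component correspondence is a bijection, and the equivalence ``$N$ contains $P$ iff $M$ contains $Q$'' follows; this is precisely the bookkeeping that the closure arguments of Proposition~\ref{prop:net-closure} then rely on.
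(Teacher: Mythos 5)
Your decomposition of $N \nar{t} M$ into component steps $P_i \nar{t} Q_i$ over the same index set is exactly the skeleton of the paper's proof, which is a plain structural induction on $N$; and your forward direction, via time-determinism of $\wnar{t}$ (hence of $\nar{t}$ on processes), is correct and in fact supplies a justification that the paper leaves implicit in its base case. The gap is in your backward direction. You correctly observe that reverse determinism fails at \name{Def$_t$}, but the repair you invoke --- a ``standing convention on process definitions'' that would rule out spurious $\wnar{t}$-predecessors --- does not exist in the paper, and no such convention can rescue the literal statement: for a fixed given $P$ the backward implication is simply false. Take $K \equaldef \nil$, $N = S[K]$, $P = S[\nil]$ and $Q = (S+t)[\nil]$. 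Then $N \nar{t} M = (S+t)[\nil]$ by \name{Def$_t$} and \name{Basic$_t$}, and $P \nar{t} Q$ by \name{Nil$_t$} and \name{Basic$_t$} (the side condition of Def.~\ref{def:strong-timed-behaviour} is vacuous, since neither term has $\rho$-transitions), so $M$ contains $Q$ while $N$ does not contain $P$. Since a constant $K$ is never the target of any $\wnar{t}$-transition, this collision between $K$ and its unfolding arises for every defined constant, guarded or not, so injectivity cannot be restored by any syntactic restriction on definitions.

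What your derivation-tracking argument actually establishes is the existential weakening: if $M$ contains $Q$, then $N$ contains \emph{some} $P'$ with $P' \nar{t} Q$, namely the source of the designated leaf of the derivation, and symmetrically in the other direction. That weaker statement is all the paper ever uses --- the proof of Proposition~\ref{prop:net-closure} invokes the lemma precisely in the form ``$M_i$ contains $Q_i$ iff $N_i$ contains $P_i$ for some $P_i \in \proc$ such that $P_i \nar{t} Q_i$'' --- so the right move is to restate the conclusion existentially rather than to claim $P = P_{j_0}$ for the externally given $P$. It is worth noting that the paper's own base case (``$N \nar{t} M$ iff $M = Q$ and $P \nar{t} Q$'') silently assumes the same reverse determinism, so your analysis has located a genuine imprecision in the published proof; but as a proof of the lemma as stated, your final identification step does not go through.
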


\begin{proof}
By induction on $N \in \nets$.
\par\smallskip\noindent
$N = \NIL$.  $N \nar{t} \NIL$ and $\NIL$ does not contain any 3D
process.
\par\smallskip\noindent
 $N = P$ with $P\in \proc$. By our operational semantics, $N \nar{t} M$ iff $M = Q$ and
$P \nar{t} Q$.
\par\smallskip\noindent 
$N = N_1 \,\| \, N_2$.  $N \nar{t} M_1 \,\| \, M_2 = M$ iff  $N_i \nar{t} M_i$ for $i=1,2$.
$P$ is contained in $N$ iff $P$ is contained either in $N_1$ or in $N_2$ iff (by induction
hypothesis) and $Q$ is contained either in $M_1$ or in $M_2$, i.e.\ iff $Q$ is contained in $M$.
\end{proof}

\par\medskip\noindent {\bf Proposition~\ref{prop:net-closure}}
\em Let $t \in \timedomain$, $P\in \proc$, $N\in \nets$, with $P$ and $N$ well-formed.

\par\smallskip\noindent 1. $P \nar{\omega} N$ and $P \nar{\rho} N$ implies $N \in \nets$
well-formed.

\par\smallskip\noindent 2. $N \nar{t} M$ implies $M$ well-formed;

\par\medskip\noindent 3. $N \dnar{t} M$ implies $M$ well-formed.\rm

\par\medskip\noindent \begin{proof}
Item 1 can be proved by (iterative applications of) Prop.~\ref{prop:closure-splittings}-2 and
Prop.~\ref{prop:split-closure}. Item 3 follows directly from Items 1 and 2 (see
Definition~\ref{def:nets-behaviour}). In what follows, we prove Item 2 by induction on $N\in \nets$.
\par\medskip\noindent
 $N = \NIL$. In such a case $N \nar{t} M$ implies $M=\NIL$ that is  well-formed.
\par\medskip\noindent
$N = P$ with $P\in \proc$. The statement is a trivial consequence of
Prop.~\ref{prop:closure-3d}-3.
\par\medskip\noindent 
$N = N_1 \,\| \, N_2$. If $N \nar{t} M$  then $N_i \nar{t} M_i$, for $i=1,2$, and $M  = M_1 \,\| \,
M_2$. By induction hypothesis, $M_1$ and $M_2$ are well-formed. So, it remains to prove that if
$Q_1$ and $Q_2$ are 3D processes that compose the networks $M_1$ and $M_2$, resp., then $Q_1$ and
$Q_2$ do not interpenetrate. We proceed by contradiction. Assume that there are $Q_1$ in $M_1$ and
$Q_2$ in $M_2$ that interpenetrate each other. By Lemma~\ref{lemma:net-time},  $M_i$ contains $Q_i$
iff $N_i$ contains $P_i$ for some $P_i \in \proc$ such  that $P_i \nar{t} Q_i$. Moreover, by
Prop.~\ref{prop:closure-3d}-1, $\shape(Q_i) = \shape(P_i) + t$. Thus, if $Q_1$ and $Q_2$
interpenetrate the same do $P_1$ and $P_2$. But this not possible since $N$ is well-formed.
\end{proof}

\begin{lemma}\label{lemma:reduction-steps}
Let $N, M \in \nets$ such that $N$ is well-formed and $N \nar{\langle P, Q, X \rangle} M$ for some 
$\langle P,Q, X \rangle \in \colliding(N)$. If $P'$ and $Q'$ are two processes composing $M$ with
$\shape(P') = \shape(P)$ and $\shape(Q') = \shape(Q)$, then $\langle P',Q', Y \rangle \notin
\colliding(M)$ for any $Y \subseteq \pos$.
\end{lemma}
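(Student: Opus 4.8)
The plan is to split on the two ways $N \nar{\langle P, Q, X\rangle} M$ can arise (Def.~\ref{def:reductionrules}): the inelastic case $P \comp Q$ governed by rule~\name{inel}, and the elastic case $P \not\comp Q$ governed by rule~\name{elas}. In both cases the common tool is that $M$ is well-formed (Prop.~\ref{prop:reduction-closure}), together with the observation that, in a well-formed network, no top-level process can occupy the exact point set of a distinct process without interpenetrating it (equal point sets force shared interior points, since basic shapes are solid). Up to structural congruence (rules~\name{inel$_\equiv$}/\name{elas$_\equiv$}) I may assume $N \equiv (P \,\|\, Q)\,\|\, R$.

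For the inelastic case, rule~\name{inel} fuses $P$ and $Q$ into the single compound process $C = (P'' \unionc{a, X} Q'')\lsbrace \vect{v}\rsbrace$, where $P \nar{\langle \alpha, X_p\rangle} P''$ and $Q \nar{\langle \overline{\alpha}, X_q\rangle} Q''$. A binding action does not change the shape (as shown in the proof of Prop.~\ref{prop:closure-3d}) and, by Lemma~\ref{lemma:shape-properties2}, the velocity update leaves point sets untouched, so $\points{C} = \points{P} \cup \points{Q} \neq \points{P}$ and likewise $\neq \points{Q}$. Hence $C$ realizes neither $\shape(P)$ nor $\shape(Q)$, and the top-level processes of $M$ are $C$ together with those of $R$. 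Any admissible $P'$ (resp.\ $Q'$) must therefore lie in $R$; but a process of $R$ with $\shape(P') = \shape(P)$ has $\points{P'} = \points{P}$ and would interpenetrate the original $P$, contradicting well-formedness of $N$. So no admissible pair exists and the statement holds vacuously.

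The elastic case carries the real content. By the same interpenetration argument, the only top-level processes of $M$ that can realize $\shape(P)$ and $\shape(Q)$ are the direct descendants $P' = P\lsbrace \vect{v}_p\rsbrace$ and $Q' = Q\lsbrace \vect{v}_q\rsbrace$ produced by rule~\name{elas}, with $(\vect{v}_p, \vect{v}_q) = P \collision{X}{e} Q$. I would then substitute $\lambda$ from Def.~\ref{def:collisions} into Equations~(1)--(2) to obtain $(\vect{v}_p - \vect{v}_q)\cdot \vect{n} = -\,(\velocity{P} - \velocity{Q})\cdot \vect{n}$; that is, the normal component of the relative velocity is exactly reversed. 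Since $\langle P, Q, X\rangle \in \colliding(N)$ forces, via condition~(2) of Def.~\ref{def:colls-information}, that $P$ and $Q$ are about to interpenetrate (a strictly approaching normal velocity, so $\lambda \neq 0$), after the update $P'$ and $Q'$ are strictly separating. As the velocity is constant along the current segment, their normal separation only increases from $0$, so there is $\epsilon > 0$ with $P' + \delta$ and $Q' + \delta$ non-interpenetrating for all $0 < \delta < \epsilon$, and in fact for all $\delta \geq 0$ before the next velocity update. This negates condition~(2) of Def.~\ref{def:colls-information} for the pair, whatever the first time of contact $\ftoc(M)$ turns out to be, giving $\langle P', Q', Y\rangle \notin \colliding(M)$ for every $Y$.

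The main obstacle I anticipate is this elastic step. Verifying the normal-velocity reversal is a short computation, but I must argue cleanly that $\langle P, Q, X\rangle \in \colliding(N)$ genuinely entails a nonzero approaching normal velocity (hence $\lambda \neq 0$), and then convert ``separating along $\vect{n}$'' into the precise failure of the colliding predicate, which is phrased relative to the network's first time of contact (Def.~\ref{def:ftoc}) rather than to the contact configuration at hand. I would close this gap using convexity of the basic shapes and constancy of velocity on each linear segment, so that a separating normal component precludes any future interpenetration (at any surface $Y$) before the next update. Finally I would check the symmetric edge/vertex sub-case of Def.~\ref{def:collisions}, where $\vect{n}$ points away from $S_2$, which is handled by the identical computation.
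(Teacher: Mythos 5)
Your proposal follows essentially the same route as the paper's proof: the same case split via rules \name{inel} and \name{elas}, the inelastic case discharged because $M$ contains no pair realizing $\shape(P)$ and $\shape(Q)$ (well-formedness of $N$ ruling out duplicates), and the elastic case settled by Equations~(1)--(2) of Def.~\ref{def:collisions} guaranteeing that $P\lsbrace\vect{v}_p\rsbrace$ and $Q\lsbrace\vect{v}_q\rsbrace$ cannot collide again. The only difference is that you carry out in detail the normal-velocity-reversal computation and the failure of condition~(2) of Def.~\ref{def:colls-information}, which the paper merely asserts, so your version is a correct and somewhat more explicit rendering of the same argument.
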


\begin{proof}
By rules in Table~\ref{table:reduce},  $N \enar{\langle P, Q, X \rangle} M$ implies $N \enar{\langle
P, Q, X \rangle} M$ implies $M \equiv (P \lsbrace \vect{v}_p\rsbrace \,\|\, Q\lsbrace \vect{v}_q
\rsbrace)
\,\|\, N'$ where the pair of velocity $(\vect{v}_p, \vect{v}_q) = P \collision{X}{e} Q$.
Moreover,  Equations (1) and (2) in Def.~\ref{def:collisions} ensure that $P \lsbrace
\vect{v}_p\rsbrace$ and $Q\lsbrace \vect{v}_q \rsbrace$ (since $N$ is well-formed, these are the
only processes in $M$ with the same shape of $P$ and $Q$, resp.) can not collide anymore.
Now assume $N \inar{\langle P, Q, X \rangle} M$. By Def.~\ref{def:reductionrules}, there are
compatible $\langle \alpha, X_p\rangle, \langle \overline{\alpha}, X_q \rangle \in \channels$ 
such that $P \nar{\langle \alpha, X_p\rangle} P'$, $Q \nar{\langle
\overline{\alpha}, X_q \rangle} Q'$ and $M \equiv  ((P' \unionc{a,X} Q') \lsbrace \vect{v} \rsbrace)
\,\|\, N'$ with $\alpha \in \{a,\overline{a}\}$, $X = X_p \cap X_q \neq \emptyset$ and $\vect{v} =
P \collision{X}{i} Q$. The statement follows because there  are no $P'$ and $Q'$ in $M$ such
that $\shape(P') = \shape(P)$  and $\shape(Q') = \shape(Q)$.

\end{proof}

\par\bigskip\noindent{\bf Proposition~\ref{prop:reduction-closure}}
\em Let $N \in \nets$, $P,Q \in \proc$ and $X$ a not-emptyset subset of $\pos$. Then $N$ well-formed
and $N \nar{\langle P, Q, X \rangle} M$ implies $M$ well-formed. \rm

\par\medskip\noindent \begin{proof}
If  $N \enar{\langle P, Q, X \rangle} M$  then $N \equiv (P \,\|\, Q) \,\|\, N'$, $M \equiv  (P
\lsbrace \vect{v}_p\rsbrace \,\|\, Q\lsbrace \vect{v}_q \rsbrace) \,\|\, N'$ and $(\vect{v}_p,
\vect{v}_q) = P \collision{X}{e} Q$. Moreover, $N$ well-formed implies that $P$, $Q$ and $N'$ are
well-formed and processes $P$ and $Q$ do not interpenetrate each other and with any other 3D process
composing $N'$ (see Def.~\ref{def:networks}).
Finally:  $P \lsbrace \vect{v}_p\rsbrace$ and $Q\lsbrace \vect{v}_q
\rsbrace$ are well-formed 3D processes with $\shape(P \lsbrace \vect{v}_p\rsbrace) = \shape(P)$ and
$\shape(Q\lsbrace \vect{v}_q \rsbrace) = \shape(Q)$ (see Lemma~\ref{lemma:shape-properties2}).
Thus, $P\lsbrace \vect{v}_p \rsbrace$ and $Q\lsbrace \vect{v}_q \rsbrace$ do not interpenetrate
each other and with any other 3D process in $N'$ and we can conclude that $M$ is well-formed.

\par\medskip\noindent Now assume $N \inar{\langle P, Q, X \rangle} M$. Again, $N \equiv (P \,\|\, Q)
\,\|\, N'$ and, by rule~\name{inel},  there
are $\langle \alpha, X_p\rangle, \langle \overline{\alpha}, X_q \rangle \in \channels$ compatible
such that $\alpha \in \{a,\overline{a}\}$, $P \nar{\langle \alpha, X_p\rangle} P'$, $Q \nar{\langle
\overline{\alpha}, X_q \rangle} Q'$ and $M \equiv  ((P' \unionc{a,X} Q') \lsbrace \vect{v} \rsbrace)
\,\|\, N'$ where $X = X_p \cap X_q \neq \emptyset$ and $\vect{v} = P \collision{X}{i} Q $. By
Prop.~\ref{prop:closure-3d}-3, $P$ and $Q$ well-formed implies  $P'$ and $Q'$ well-formed. Moreover:

\par\smallskip\noindent (1) $P \nar{\langle \alpha, X_p\rangle} P'$ and $Q \nar{\langle
\overline{\alpha}, X_q \rangle} Q'$ imply $X_p \subseteq \boundary{P}$ and $X_q \subseteq
\boundary{Q}$, resp. Thus, $X = X_p \cap X_q \subseteq \boundary{P} \cap \boundary{Q} =
\boundary{P'} \cap \boundary{Q'}$ (since Prop.~\ref{prop:closure-3d}-2 implies
$\shape(P')= \shape(P)$ and $\shape(Q')= \shape(Q)$), $P' \unionc{a,X} Q'$ and $(P' \unionc{a,X} Q')
\lsbrace \vect{v} \rsbrace$ are well-formed 3D processes.

\par\smallskip\noindent (2) $\shape((P' \unionc{a,X} Q') \lsbrace \vect{v} \rsbrace) =
\shape(P' \unionc{a,X} Q') = \points{P'} \cup \points{Q'} = \\
\points{P} \cup \points{Q}$. As a consequence, $(P' \unionc{a,X} Q') \lsbrace \vect{v} \rsbrace$ can
not interpenetrate any 3D process that compose the network $N'$ (otherwise either $P$ or $Q$ must
do the same).

\par\medskip\noindent Aso in this case we can conclude that $M$ is well-formed.
\end{proof}

\end{document}